\title{Optimal Scaling for the Proximal Langevin Algorithm in
High Dimensions}
\newcommand{\prox}{\mathrm{Prox}}
\newcommand{\dist}{\overset{\mathcal{D}}{\sim}}
\newcommand{\tr}{\mathrm{Tr}}
\newcommand{\C}{\mathcal{C}}
\newcommand{\OO}{\mathcal{O}}
\newcommand{\RR}{\mathbb{R}}
\newcommand{\EE}{\mathbb{E}}
\newcommand{\eps}{\varepsilon}
\newcommand{\pphi}{\hat{\varphi}}
\newcommand{\bra}[1]{\langle #1 \rangle}
\newcommand{\Normal}{\textrm{N}}
\newcommand{\err}{\textbf{e}}
\newcommand{\h}{\mathcal{H}}
\newcommand{\eqdef}{\overset{{\mbox{\tiny  def}}}{=}}
\newcommand{\longweak}{\Longrightarrow}
\newcommand{\opnm}[3]{\|#1\|_{{\cal L}({\cal H}^{#2},{\cal H}^{#3})}}
\newcommand{\sqrtd}[1]{{\textstyle \sqrt{#1}}}
\renewcommand\phi{\varphi}
\newcommand\rr{\mathit{R}}
\newcommand{\xmala}{x_{\mathrm{MALA}}}
\newcommand{\xprox}{x_\mathrm{Prox-MALA}}
\newcommand{\argmin}{\mathrm{argmin}}
\newtheorem{assumptions}[theorem]{Assumptions}
\newcommand{\be}{\begin{equs}}
\newcommand{\ee}{\end{equs}}
\begin{document}

\title{Optimal Scaling for the Proximal Langevin 
Algorithm in High Dimensions}

\author{\name Natesh S. Pillai \email \textsc{pillai@fas.harvard.edu} \\
       \addr Department of Statistics\\
       Harvard University\\
       MA 02138, USA}
       
\editor{Matthew Hoffman}

\maketitle

\begin{abstract}
The Metropolis-adjusted
Langevin (MALA) algorithm is a sampling algorithm that  incorporates the gradient
of the logarithm of the target density in its proposal distribution. 
In an earlier joint work \citet{pill:stu:12}, the author
had extended the seminal work of \cite{Robe:Rose:98} and showed that 
 in stationarity, MALA
applied to an $N-$dimensional approximation of
the target will take ${\cal O}(N^{\frac13})$ steps
to explore its target measure.
It was also shown in \citet{Robe:Rose:98} and \citet{pill:stu:12} that, as a consequence of the diffusion limit, the MALA algorithm is optimized at an average acceptance
probability of $0.574$. 
In \citet{pere:16}, the author introduced the proximal MALA algorithm
where the gradient of the log target density is replaced by the proximal function (mainly aimed at implementing MALA for non-differentiable target densities).  In this paper, we show that for a wide class of twice differentiable target densities, the proximal MALA enjoys the same optimal scaling as that of MALA in high dimensions and also has an average optimal acceptance probability of $0.574$. The results of this paper thus give the following practically useful guideline: for smooth target densities where it is expensive to compute the gradient while implementing MALA, users may replace the gradient with the corresponding proximal function (that can be often computed relatively cheaply via convex optimization) \emph{without} losing any efficiency gains from optimal scaling. We show this for two class of examples. First, for the product of Gaussians, we identify the optimal scale for proximal MALA and show that it is identical to MALA.
Next, following the exact framework used in \cite{pill:stu:12}, we define a version of the proximal MALA algorithm in a Hilbert space. We show that for a certain class of twice differentiable, infinite dimensional \emph{non-product} measures commonly used in applications, the proximal MALA applied to an $N-$dimensional approximation of
the target also will take ${\cal O}(N^{\frac13})$ steps
to explore the invariant measure, with an optimal acceptance probability of $0.574$.
 This confirms some of the empirical observations made in \cite{pere:16}.
\end{abstract}

\begin{keywords}
   Markov Chain Monte Carlo, Metropolis Adjusted Langevin Algorithm, Scaling limit, Diffusion Approximation, Convex Optimization, Proximal Operators, Moreau Envelope.
\end{keywords}
\section{Introduction}
\label{sec:introduction}

The Langevin diffusion in $\RR^N$
\be \label{eqn:lang}
dX_t = \nabla \log \pi^N(X_t) dt + \sqrt{2}\,dW_t
\ee
under practically realistic regularity assumptions on the measure $\pi^N$ has $\pi^N$ as its invariant measure. The Langevin algorithm has been one of the workhorses for sampling probability measures;  it is widely used in Bayesian statistics \citep{Case:Robe:04}, data assimilation, inverse problems \citep{Stua:10} and machine learning \textit{e.g.,} \citet{wel:11, lam:21}, among other areas of data science.
 The time discretization of $X_t$ with step-size $\delta$ gives rise to the Langevin proposal:
 \begin{equs} \label{eqn:MALAalgdef}
y = x + \delta \, \nabla \log \pi^N(x) + \sqrt{2 \delta} \, Z^N, \qquad Z^N \sim \Normal(0,  \mathrm{I}_N) \;.
\end{equs}
Consider a $\pi^N-$invariant Metropolis Hastings Markov chain 
$\big\{ x^{k,N} \big\}_{k \geq 1}$ obtained by proposing $y$ from the current state $x$ 
according to the kernel $q(x,y)$ given by \eqref{eqn:MALAalgdef} and then accepted with probability 
\be \label{eqn:accrej}
\alpha(x, y) = 1 \wedge {\pi^N(y) q(y,x) \over \pi^N(x) q(x,y)}.
\ee
The proposal \eqref{eqn:MALAalgdef} coupled with the accept-reject mechanism above constitutes the Metropolis Adjusted Langevin
Algorithm (MALA) \citep{Case:Robe:04}. The proposal kernel for the simpler, Random Walk Metropolis (RWM) algorithm is derived from the following random walk:
\begin{equs} \label{eqn:RWMdef}
y = x +  \,\sqrt{\delta}\, Z^N, \qquad  Z^N \sim \Normal(0,  \mathrm{I}_N) \;.
\end{equs}

An important question regarding the computational complexity of these Markov chains is how should the parameter $\delta$ vary as a function of the dimension $N$.  A well-known heuristic for choosing $\delta$ is the  following: smaller values of $\delta$ lead to high acceptance rates but the chain moves very slowly 
and therefore may not be efficient. Larger values of $\delta$ lead to larger moves, but are rejected more often because of smaller acceptance probabilities. The ``optimal scale" for 
the proposal variance  thus strikes a balance between making large moves and still having
an $\mathcal{O}(1)$ acceptance probability as a function of the dimension $N$.

To make this heuristic precise, consider the continuous interpolant of the Markov chain $X^{k,N}$:
\begin{equs} 
z^N(t)= \Bigl(\frac{t}{\Delta t} - k\Bigr)\, x^{k+1,N} 
+ \Bigl(k+1 - \frac{t}{\Delta t}\Bigr) \,x^{k,N}, 
\qquad \text{for} \qquad k \Delta t \leq t < (k+1) \Delta t. \\ \label{eqn:MCMCe}
\end{equs}
We choose the proposal variance to satisfy $\delta = \ell\Delta t$, 
with $\Delta t = N^{-\gamma}$ setting the scale in terms of
dimension and the parameter 
$\ell$  a ``tuning'' parameter which 
is independent of the dimension $N$.
We now discuss how to choose $\gamma$ and $\ell$.  

Suppose that  $\pi^N$ is the product of $N$ probability densities $\pi$,
\be \label{eqn:prodmeasure}
\pi^N(x) \propto \prod_{i=1}^N \pi(x_i).
\ee 
For this product measure, the seminal papers \citet{Robe:etal:97} and  \citet{Robe:Rose:98} respectively  showed that, \emph{in stationarity}, the ``optimal" choice for $\gamma$ that maximizes the expected squared jumping distance is $\gamma=1$  for the RWM algorithm and $\gamma=\frac13$ for the MALA. Moreover, the projection of $z^N$ into any single 
fixed coordinate direction $x_i$ converges weakly
in $C([0,T];\RR)$ to $z$, the scalar diffusion process of the form:
\begin{equs}
\frac{dz}{dt} &= h(\ell) [\log \pi(z)]'+\sqrt{2 h(\ell) }\frac{dW}{dt}\label{eq:sde}.
\end{equs}
Here $h(\ell)>0$ is a constant determined by the parameter
$\ell$ from the proposal variance. The quantity $h(\ell)$ has the interpretation as the ``speed measure'' of the limiting diffusion; see \cite{Robe:Rose:01}. Choosing $\ell$ to maximize $h(\ell)$, thus maximizing the speed of the limiting diffusion, then yields an optimal average acceptance probability
of $0.234$ for the Random Walk Metropolis Algorithm and $0.574$ for MALA. A remarkable feature of these results is that the optimal acceptance probabilities for these two algorithms are ``universal" -- they hold for a wide range of $\pi$.

The  above analysis shows that the number of steps required 
to sample the target measure
grows as $\OO(N)$ for RWM, but only as $\OO(N^{\frac13})$ for MALA.
This quantifies the efficiency gained by
use of MALA over RWM, and in particular from employing local moves
informed by the gradient of the logarithm of the target density. These theoretical analyses have inspired much further research as they give useful guidelines for implementation of MALA in high dimensions:  in addition to employing an explicit scale in the proposal variance as predicted by the theory,  one should 
``tune" the proposal variance of the RWM and MALA algorithms 
so as to have acceptance probabilities of $0.234$ and $0.574$
respectively.

\subsection{Proximal MALA algorithm} \label{sec:introproxmala}
The proximal MALA algorithm was introduced in \citet{pere:16}.
For a convex function $f:\RR^N \mapsto \RR$, $\lambda >0$ and $\| \cdot \|$ denoting the Euclidean norm, define the proximity operator \citep[also called the $\lambda$-Moreau envelope; see][]{baus:comb:11}:
\be
\prox_f^\lambda(x) = \argmin_{y \in \RR^N} \Big(f(y) + \frac{1}{2\lambda}\|y - x\|^2 \Big).
\ee
The following two extreme limits are well known for proximal functions \citep[see][chap.~12]{baus:comb:11}:
$$ \lim_{\lambda \rightarrow 0} \prox_f^\lambda(x)  =x, \quad \quad \lim_{\lambda \rightarrow \infty} f(\prox_f^\lambda(x)) = \inf_{y \in \RR^N} f(y).$$
Let $\pi^N$ be a probability density in $\RR^N$ and consider its $\lambda-$Moreau approximation \citep[see Equation (3) of][]{pere:16}:
\be
\pi^N_\lambda (x) \propto \sup_{u \in \RR^N} \pi(u) \exp\Big(-\frac{1}{2\lambda} \|u - x \|^2\Big).
\ee
If $\pi^N(x) \propto \exp(-\Psi(x))$ for a convex function $\Psi$, we have the identity:
\be \label{eqn:Morenv}
\pi^N_\lambda (x) \propto \exp \Big\{-\Psi\Big(\prox_{\Psi}^\lambda(x)\Big)\Big\}\exp\Big \{ -\frac{1}{2 \lambda} \|\prox_{\Psi}^\lambda(x) - x \|^2\Big\}.
\ee
In addition, if $\Psi$ is differentiable, we also have the identity \citep[Equation (12.28)]{baus:comb:11}:
\be \label{eqn:proxgradid0}
\frac{1}{\lambda}(x - \prox_{\Psi}^\lambda(x)) = \nabla \Psi(\prox_{\Psi}^\lambda(x)).
\ee
Equation \eqref{eqn:proxgradid0} can be thought of as an implicit gradient. Indeed, the usual explicit Euler discretization for MALA yields:
\be \label{eqn:expl}
x^{k+1,N} = x^{k, N} - \lambda \nabla \Psi(x^{k,N})  + \sqrt{2 \lambda} \, Z^N, \quad Z^N\sim \Normal(0,  \mathrm{I}_N) 
\ee 
whereas \eqref{eqn:proxgradid0} leads to the implicit update equation
\be \label{eqn:impl}
x^{k+1,N} = x^{k, N} - \lambda \nabla \Psi(x^{k+1,N}) + \sqrt{2 \lambda} \, Z^N
\ee 
or equivalently
\be \label{eqn:implicit}
x^{k+1,N}  = \prox_\Psi^\lambda(x^{k,N}) + \sqrt{2 \lambda} \, Z^N.
\ee
Motivated by \eqref{eqn:proxgradid0} and \eqref{eqn:implicit}, in  \cite{pere:16}, the author introduced the following modification of the discrete Langevin proposal \footnote{For notational consistency, we have set $2\delta = \delta'$ where $\delta'$ is the analogous parameter in Pereyra's definition; see Equation (9) of \cite{pere:16}} \eqref{eqn:MALAalgdef}:
\be \label{eqn:proxmala0}
y =  \Big(1- \frac{\delta}{\lambda}\Big) \,x +  {\delta \over \lambda} \prox_\Psi^{\lambda}(x) + \sqrt{2 \delta} \, Z^N, \qquad Z^N \sim \Normal(0,  \mathrm{I}_N) \;.
\ee
The proximal MALA Markov chain then proceeds via the accept-reject mechanism \eqref{eqn:accrej} using the proposal given in \eqref{eqn:proxmala0}. 

 In \cite{pere:16}, the author chose $\delta = \lambda$ on grounds of the stability of the resulting algorithm. We also make this choice. Thus our proximal MALA proposal is given by: 
\be \label{eqn:proxmala}
y =   \prox_\Psi^{\delta}(x) + \sqrt{2 \delta} \, Z^N, \qquad Z^N \sim \Normal(0,  \mathrm{I}_N) \;.
\ee
During the revision stages of this paper, the preprint \cite{cruc:23} was posted that significantly generalized our results. In  \cite{cruc:23}, the authors show that  $\lambda \neq \delta$ leads to sub-optimal results; see Section \ref{sec:close} for more discussion. 

One of the main reasons why the proximal MALA was introduced in \cite{pere:16} is that the proposal \eqref{eqn:proxmala} can be applied to targets even when $\Psi$ is not differentiable:
$\mathit{e.g.}$, the Laplace density $\Psi(x) = |x|$. 
 Quoting \cite{pere:16}: ``finally, similarly to other MH algorithms based on local proposals, proximal MALA may be geometrically ergodic yet perform poorly if the proposal variance $\delta$ is either too small or very large. Theoretical and experimental studies of MALA show that for many high-dimensional target densities the value of $\delta$ should be set to achieve an acceptance rate of approximately $40\%-70\%$ (Pillai et al. 2012)." 

\subsection{Motivation} \label{sec:intuit} In this paper, we show that both the MALA algorithm and the proximal-MALA algorithm \emph{enjoy} the same optimal scaling and hence the optimal acceptance probability  for a wide range of \emph{differentiable} target measures. Our results thus provide the first theoretical confirmation of the empirical observation above made in \cite{pere:16}. 
It is natural to ask why one should consider differentiable target densities for studying the performance of the proximal MALA algorithm since it was developed mainly for addressing the non-differentiable case. 
We mention a few reasons that illustrate why such a study is useful.
\begin{enumerate}
\item The proposal for the MALA algorithm is obtained from the explicit (forward) Euler discretization in \eqref{eqn:expl}:
\be
x^{k+1,N} = x^{k, N} - \lambda \nabla \Psi(x^{k,N}) + \sqrt{2\lambda} \,Z^N,
\ee
whereas the proximal MALA proposal is obtained from the implicit (backward) Euler discretization as described in \eqref{eqn:impl}. Thus is interesting to know, and far from obvious \emph{apriori}, that if this small change in the proposal obtained by the implicit discretization (proximal MALA) has better or worse scaling properties than the explicit method (MALA).  As mentioned before, one of our main contribution in this paper is to show that for a wide class of differentiable targets,  both of these methods have the same optimal scaling. In Section \ref{sec:intuit} we give a heuristic argument showing why this is the case. It is interesting to note that  even if the target distribution is non-differentiable only on a set of measure zero (\textit{e.g.}, the Laplace density, $\Psi(x) = |x|$), the proximal MALA does not achieve the $N^{-\frac{1}{3}}$ scaling as it does for smooth targets;  see \cite{cruc:23}. 

\item 
Even if the target density is differentiable, in many practical applications it may be very expensive to compute the gradient, whereas it is often cheap to compute the proximal function via convex optimization. For example, in many applied models encountered in data assimilation and Bayesian inverse problems \citep{Stua:10}, the target density is of the form: 
\be
\pi^N(\Theta|Y) \propto \exp\Big(-{1 \over 2 \sigma^2} \|Y - G(\Theta)\|^2 + h(\Theta)\Big)
\ee
where $G:\mathbb{R}^N \mapsto \mathbb{R}$ is an expensive, non-linear function to compute (such as the solution of a climate model obtained via solving a partial differential equation), $\Theta$ is a parameter we wish to compute posterior inference for, $Y$ is the observed data and $\exp(h(\Theta))$ denotes the prior distribution for $\Theta$. There have been quite a few papers recently where a sophisticated neural network was used to approximate $G$ when it is a solution of a partial differential equation (\cite{kova:23, jian:23}). In such examples, even for lower dimensional $\Theta$, it can be even more expensive to compute the gradient of a neural network so as to compute the derivative of $G$ with respect to $\Theta$. Thus there is a natural need for developing derivative free sampling algorithms that enjoy the same optimality properties of Langevin algorithms.\footnote{One such class of algorithms is the recently studied zeroth-order discretization of Langevin algorithms in \cite{roy:22}. It would be of interest to compare the performance of proximal MALA to the algorithms developed in \cite{roy:22}.}

\item Optimal scaling is not the only facet of algorithm design; many other factors must be taken into consideration. Even though our results show that the optimal scaling and the optimal acceptance probability for MALA and proximal MALA algorithms are the same, there are many examples in which these two algorithms show vastly different behavior both during the transient phase and at stationarity. It is well known that in many ODEs and PDEs, the implicit discretization is numerically more stable; see \cite{elli:93} for a construction of an implicit method that is much more stable than its explicit counterpart. 
Let us give another example from Bayesian statistics. Consider a Poisson regression model:
\be
Y|X \sim \mathrm{Poisson}(e^X), 
\ee
with the prior distribution $\pi(X) \propto \exp(-\frac{1}{2} X^2)$. The goal is to infer the posterior distribution $\pi(X|Y)$. Suppose that we observed $Y = 1$. Then 
$\pi(X|Y=1) \propto e^{-\frac{1}{2}X^2 + X - e^{X}}$. 
Since $\pi(X|Y = 1)$ has very light tails, the gradient of $\log \pi(X|Y=1)$ takes very large negative values for $X \gg 1$. 
\begin{figure}[ht]
\centering
\includegraphics[width=\textwidth]{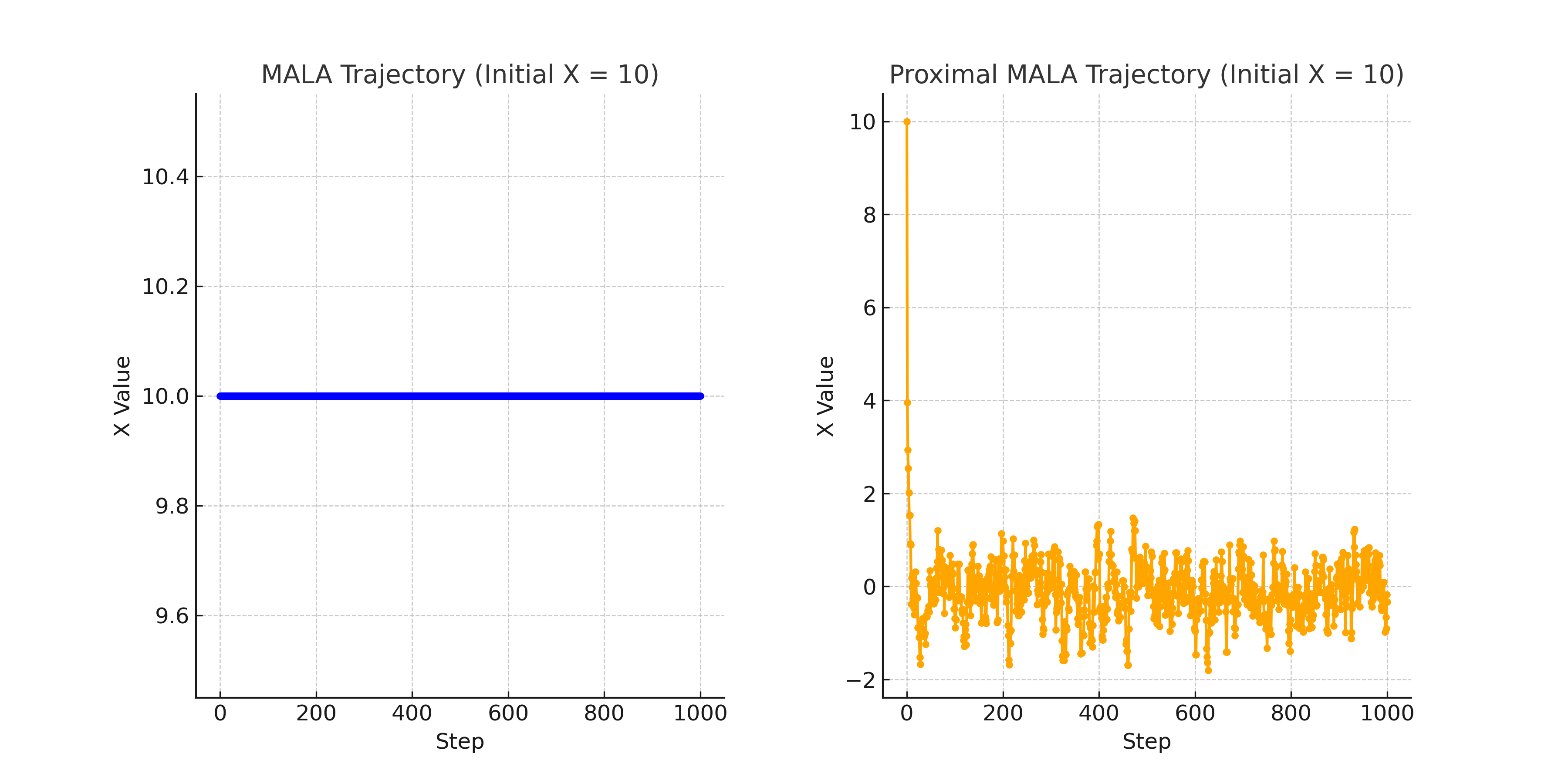}
\caption{Trajectories of the MALA and Proximal MALA algorithms for the Poisson regression example. Both chains were initialized at \( X = 10 \).}
\label{fig:mala_traj}
\end{figure}
Thus if initialized at large values of $X$, MALA and proximal MALA show very different behavior. As Figure \ref{fig:mala_traj} shows, the MALA algorithm gets stuck when initialized at $X = 10$ whereas the proximal MALA mixes after an initial burn in of $\sim 50$ steps. We conjecture that, for this example, MALA is not even geometrically ergodic whereas proximal MALA is. However, we believe that using the methods in \cite{cruc:23}, one can show that both MALA and proximal MALA have the same optimal scaling for this example.
\end{enumerate}
\subsection{Main Results}
We study the optimal scaling of proximal MALA in two contexts:
\begin{enumerate}
\item When the target measure is a product of standard Gaussians, in Theorem \ref{thm:pmalgaus} we show that the optimal scale and optimal acceptance probability for the proximal MALA algorithm is identical to that of MALA.  The recent work \cite{cruc:23} extends our results in this case to a much wider class of densities.
\item For a class of infinite dimensional non-product measures studied in \cite{pill:stu:12},  we show that the optimal scaling of $N^{-1/3}$ for MALA as worked out in \cite{Robe:Rose:98, pill:stu:12} is also optimal for the proximal MALA algorithm when the log density is convex and differentiable; see Theorem \ref{thm:main} for the formal statement of our main result.
\end{enumerate} 
The results of our paper thus give the following practically useful guideline: for smooth target densities where the gradient is expensive to compute or numerically unstable while implementing MALA, users may replace the gradient with the corresponding proximal function  \emph{without} losing any efficiency gains from optimal scaling; furthermore, users can set the proximal parameter $\delta$ to $N^{-1/3}$ and tune the algorithm to have an acceptance probability of $0.574$ just as in MALA. Of course, as discussed in Section \ref{sec:intuit}, optimal scaling alone does not give a complete picture for preferring one algorithm to the other.  Our paper takes the first theoretical steps using convex optimization to study optimal scaling of MCMC algorithms. 
\subsection{High-level explanation behind the optimal scaling} \label{sec:intuit}
Let us give a high-level explanation of why the proximal MALA enjoys the same scaling as that of MALA when $\Psi$ is differentiable. 
When $\Psi$ is smooth, it can be shown under reasonable assumptions on the second derivative of $\Psi$ that:
\be \label{eqn:intuit}
|\prox_\Psi^\delta(x) - x| = \mathcal{O}(\delta).
\ee
Consequently, setting $\lambda = \delta$ in the implicit Euler identity \eqref{eqn:proxgradid0} and using \eqref{eqn:intuit} yields that
\be 
 \prox_\Psi^\delta(x)&= x  - \delta \nabla \Psi(\prox_\Psi^\delta(x)) \\
&=  x  - \delta \nabla \Psi(x) + \rr(x,\delta), \quad \quad \rr(x,\delta) = \mathcal{O}(\delta^2). \label{eqn:mainint}
\ee
The remainder term $\rr(x,\delta)$ is $\mathcal{O}(\delta^2)$.
Comparing this with \eqref{eqn:proxmala}, we see that the proximal MALA proposal can be written as
\be \label{eqn:malaform}
y &=  x  - \delta \nabla \Psi(x) + \rr(x,\delta)  + \sqrt{2 \delta} \, Z^N, \qquad Z^N \sim \Normal(0,  \mathrm{I}_N) \\
&= \xmala + \rr(x,\delta)
\ee
where $\xmala$ is the MALA proposal.
In high dimensions, the drift term in the diffusion limit comes from $
\mathcal{O}(\delta)$ term; the $\mathcal{O}(\delta^2)$ remainder term $\rr(x,\delta)$ does not contribute to the diffusion limit and vanishes in the large $N$ limit. Our paper formalizes this observation for a class of infinite dimensional models studied in \cite{pill:stu:12}; refer to Equation \eqref{eqn:pmala-mala}, Lemma \ref{lem.drift.approx} and the related discussion in Section \ref{sec:algorithm}.

\begin{remark}
Another important theoretical aspect is to study the mixing times of proximal  MALA algorithms and obtaining non-asymptotic guarantees. See \cite{Durm:18} and the references therein. As in the original scaling papers \cite{Robe:etal:97} and  \cite{Robe:Rose:98}, we also do not study the mixing times of the proximal Markov chains in this paper. 
\end{remark}
\subsection{Infinite Dimensional Diffusions} 
Motivated by applications in data assimilation, inverse problems and Bayesian nonparametrics (see \cite{Stua:10} and  \cite{Hair:Stua:Voss:10}), the papers \cite{Matt:Pill:Stu:11} and  \cite{pill:stu:12} extended the results of product measures \cite{Robe:Rose:98} to certain infinite dimensional \emph{non-product} target measures. 
In both of these papers, the target measure of interest,
$\pi$, is on an infinite dimensional real separable Hilbert 
space $\h$ and 
is absolutely continuous with respect to a 
Gaussian measure $\pi_0$ on $\h$
with  mean zero and covariance operator $\C$. 
This framework for the analysis of MCMC in high dimensions
was first studied in the papers \cite{Besk:etal:08,BRS09,Besk:Stua:07}. 
The Radon-Nikodym derivative defining the target measure is
assumed to have the form 
\begin{equs}
\frac{d\pi}{d\pi_0}(x) = M_{\Psi} \exp ( -\Psi(x) )
\label{eqn:targmeas}
\end{equs}
for a real-valued functional 
$\Psi: \h^s \mapsto \mathbb{R}$ 
defined on a subspace $\h^s \subset \h$ that 
contains the support of the reference measure $\pi_0$; 
here $M_{\Psi}$ is a normalizing constant. 

It is proved in
\cite{Dapr:Zaby:92,Hair:etal:05,Hair:Stua:Voss:07} that
the measure $\pi$ is invariant for ${\cal H}-$valued SDEs 
(or stochastic PDEs -- SPDEs) with the form
\begin{equs} \label{eqn:spde}
\frac{dz}{dt} &= - h(\ell) \big(z + \C \nabla \Psi(z) \big)+ \sqrt{2\,h(\ell)}\,\frac{dW}{dt},
\quad z(0)=z^0 
\end{equs}
where $W$ is a  Brownian motion (see \cite{Dapr:Zaby:92}) in 
$\h$ with covariance operator $\C$ and any constant $h(\ell) > 0$.

In \cite{pill:stu:12}, the MALA algorithm was studied
when applied to a sequence of finite dimensional approximations 
of $\pi$ as in \eqref{eqn:targmeas}. 
The continuous time interpolant
of the Markov chain $z^N$ given by \eqref{eqn:MCMCe} is
shown to converge
weakly to $z$ solving \eqref{eqn:spde} in $C([0,T];\h^s)$.
Furthermore, the scaling
of the proposal variance which achieves this scaling
limit is inversely proportional to $N^{1/3}$ ($i.e.,$ corresponds to the exponent 
$\gamma= 1/3$) and the speed
of the limiting diffusion process is maximized at the
same universal acceptance probability of $0.574$ that 
was found for product measures  \cite{Robe:Rose:98}.

\subsection{Notation}

Throughout the paper we use the
following notation in order to compare 
sequences and to denote conditional expectations.
\begin{itemize}
\item Two sequences $\{\alpha_n\}$ and $\{\beta_n\}$ satisfy $\alpha_n \lesssim \beta_n$ 
if there exists a constant $K>0$ satisfying $\alpha_n \leq K \beta_n$ for all $n \geq 0$.
The notations $\alpha_n \asymp \beta_n$ means that $\alpha_n \lesssim \beta_n$ and $\beta_n \lesssim \alpha_n$.
\item
Two sequences of real functions $\{f_n\}$ and $\{g_n\}$ defined on the same set $D$
satisfy $f_n \lesssim g_n$ if there exists a constant $K>0$ satisfying $f_n(x) \leq K g_n(x)$ for all $n \geq 0$
and all $x \in D$.
The notations $f_n \asymp g_n$ means that $f_n \lesssim g_n$ and $g_n \lesssim f_n$.
\item
The notation $\EE_x \big[ f(x,\xi) \big]$ denotes
expectation with respect to $\xi$ with 
the variable $x$ fixed. 
\end{itemize}

\section{A Simple Example: Product of Gaussians}
We start with a simple case, where the target measure is the product of standard Gaussians: 
\be \label{eqn:pi1d}
 \pi^N(x) \propto \prod_{i=1}^N \exp(-x^2_i/2).
\ee
The MALA proposal for $\pi^N$ given in \eqref{eqn:pi1d} is:
\be
 y = x(1- \delta) + \sqrt{2\delta}\, Z, \quad \quad Z \sim \mathrm{N}(0,\mathrm{I}_N).
\ee
The Metropolis-Hastings acceptance ratio $\alpha(x,y)$ given in \eqref{eqn:accrej} with
\be
q(x,y) = \prod_{i=1}^N \exp\Big( -{1 \over 4\delta} \big(y_i - x_i(1-\delta)\big)^2 \Big).
\ee 
The usual calculation for finding the optimal scale proceeds as follows.
Expanding the term $L_n \equiv \log \Big( {\pi^N(y) q(y,x) \over \pi^N(x) q(x,y)} \Big)$ in $\delta$ yields \footnote{We used MATHEMATICA for obtaining this expansion; also see \cite{Robe:Rose:98}.}:
\be \label{eqn:Ln}
L_n = -\frac{\delta^{3/2}}{\sqrt{2}} \sum_{i=1}^N x_i Z_i+\frac{1}{2}\delta^2 
\sum_{i=1}^N\left(x^2_i- Z_i^2\right)+\frac{\delta^{5/2}}{\sqrt{2}} \sum_{i=1}^Nx_i Z_i -\frac{\delta^3}{4} \sum_{i=1}^Nx^2_i+\mathcal{O}\left(\delta^{7/2}\right).
\ee
Since the chain is at stationarity, the first three summands in \eqref{eqn:Ln} have expectations zero:
\be
\EE^{\pi^N} \EE_x (x  Z_i) = \EE^{\pi^N} \EE_x  \left(x^2_i- Z_i^2\right) = \EE^{\pi^N} \EE_x (x_i Z_i) = 0.
\ee
Moreover, the variance of the coefficient of the $\mathcal{O}(\delta^{3/2})$ term satisfies:
\be
\mathrm{Var}_x(\sum_{i=1}^N x_i Z_i) = \sum_{i=1}^Nx^2_i.
\ee
Thus if we set $\delta  = \ell N^{-1/3}$, using the fact that $\frac{1}{N} \sum_{i=1}^N x_i^2 \rightarrow 1$ almost surely, we obtain that 
\be \label{eqn:meanvar}
L_n \Longrightarrow Z_\ell \sim \mathrm{N}(-{\ell^3 \over 4}, {\ell^3 \over 2})
\ee
and the acceptance probability:
\be
\mathbb{E}(1 \wedge e^{L_n}) \rightarrow a(\ell) \equiv \mathbb{E}(1 \wedge e^{Z_\ell}).
\ee
In particular, $L_n = \mathcal{O}(1)$ for $\delta = N^{-1/3}$, and thus the \emph{optimal scale} that makes the size of acceptance probability equal to $\mathcal{O}(1)$ corresponds to $\delta = N^{-1/3}$. The ongoing computation generalizes for quite a large class of product measures $\pi^N$ far beyond Gaussians, and forms the basis of the diffusion limit obtained in \cite{Robe:Rose:98}. 
Finally, to have the optimal acceptance probability of $0.574$ that maximizes the speed of the limiting diffusion, all one needs to verify is that the limiting Gaussian random variable $Z_\ell$ satisfies:
\be \label{eqn:Zell}
-2\mathbb{E}(Z_\ell) = \mathrm{Var}(Z_\ell).
\ee
Indeed, once we have the relation \eqref{eqn:Zell}, the limiting diffusion has the speed measure: 
$$ h(\ell) = \ell^2 \mathbb{E}(1 \wedge e^{Z_\ell}) = 2\ell^2 \Phi(-{K \over 2} \ell^3)$$
for some constant $K$ that depends on the target measure and $\Phi$ is the CDF of the standard Gaussian distribution. As shown in Theorem 2 of \cite{Robe:Rose:98}, the value of $\ell$ that maximizes $h(\ell)$ is independent of $K$ since making the transformation $u = {K \over 2} \ell^3$ yields that 
\be
\max_{\ell} h(\ell) = 2^{5/3} K^{-2/3}  \max_{u} u^{2/3} \Phi(-u)
\ee
and the maximizer $\hat{u}$ of the latter term is independent of $K$, see Theorem 2 of \cite{Robe:Rose:98}. Thus the optimal acceptance probability is also independent of $K$: it is just $\hat{a} = 2 \Phi(-\hat{u})$.

Next, we perform the same computation for the proximal MALA algorithm.
The proximal MALA proposal for $\pi^N$ given in \eqref{eqn:pi1d} is:
\be \label{eqn:proxmalagaus}
 y = {1 \over (1+ \delta)} x + \sqrt{2\delta}\, Z, \quad \quad Z \sim \mathrm{N}(0,\mathrm{I}_N)
\ee
with the corresponding $q(x,y)$:
\be
q(x,y) = \prod_{i=1}^N \exp\Big( -{1 \over 4\delta} \big(y_i - {x_i \over (1+\delta)}\big)^2 \Big).
\ee 
\begin{theorem} \label{thm:pmalgaus}
For the proximal MALA proposal given in \eqref{eqn:proxmalagaus}, the choice of $\delta = \ell N^{-1/3}$ yields an acceptance probability of $\mathcal{O}(1)$. The limiting acceptance probability $a(\ell)$ can be expressed as $a(\ell) = \EE(1 \wedge e^{\tilde{Z_\ell}})$
where $\tilde{Z}_\ell$ is a Gaussian variable satisfying \eqref{eqn:Zell}.
\end{theorem}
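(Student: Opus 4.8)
The plan is to carry out, for the proposal \eqref{eqn:proxmalagaus}, exactly the computation performed above for MALA in \eqref{eqn:Ln}--\eqref{eqn:Zell}. The one structural simplification I would exploit is that for the Gaussian target both $\log \pi^N$ and the two proposal densities are quadratic, and $y_i = x_i/(1+\delta) + \sqrt{2\delta}\,Z_i$ is affine in $(x_i,Z_i)$; hence $L_n \equiv \log\big(\pi^N(y)q(y,x)/\pi^N(x)q(x,y)\big)$ is an \emph{exact} quadratic form in the i.i.d.\ pairs $(x_i,Z_i)$, whose coefficients are explicit algebraic functions of $\delta$ alone. No Taylor remainder estimates are needed beyond expanding these coefficients.

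Writing $a = 1/(1+\delta)$ and substituting $y_i = a x_i + \sqrt{2\delta}\,Z_i$, one finds
\begin{equs}
L_n \;=\; A(\delta)\sum_{i=1}^N x_i^2 \;+\; B(\delta)\sum_{i=1}^N Z_i^2 \;+\; C(\delta)\sum_{i=1}^N x_i Z_i ,
\end{equs}
with the closed forms
\begin{equs}
A(\delta)=\frac{\delta^2(2+\delta)(3+2\delta)}{4(1+\delta)^4},\qquad B(\delta)=-\frac{\delta^2(3+2\delta)}{2(1+\delta)^2},\qquad C(\delta)=-\frac{\delta^{3/2}(3+2\delta)}{\sqrt{2}\,(1+\delta)^3}.
\end{equs}
Two cancellations, identical in spirit to MALA, drive everything: the naive $\mathcal{O}(\delta^{1/2})$ and $\mathcal{O}(\delta)$ contributions to $L_n$ vanish, so the leading \emph{stochastic} term is $C(\delta)\sum_i x_iZ_i$ with $C(\delta) = -\tfrac{3}{\sqrt2}\delta^{3/2}+\mathcal{O}(\delta^{5/2})$; and the diagonal part satisfies
\begin{equs}
A(\delta)+B(\delta)=-\frac{\delta^3(3+2\delta)^2}{4(1+\delta)^4},
\end{equs}
so its $\mathcal{O}(\delta^2)$ pieces cancel and the genuine drift appears only at order $\delta^3$. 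Setting $\delta=\ell N^{-1/3}$ and using $N\delta^3=\ell^3$ then gives
\begin{equs}
\EE^{\pi^N}\EE_x[L_n]=N\big(A(\delta)+B(\delta)\big)\longrightarrow -\tfrac{9}{4}\ell^3,\qquad \mathrm{Var}(L_n)=N\,C(\delta)^2+\mathcal{O}(N\delta^4)\longrightarrow \tfrac{9}{2}\ell^3,
\end{equs}
the variance coming solely from the $x_iZ_i$ term since $\mathrm{Var}\big(A(\delta)\sum_i x_i^2 + B(\delta)\sum_i Z_i^2\big)=\mathcal{O}(N\delta^4)=\mathcal{O}(N^{-1/3})\to 0$.

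To pass from moments to a law, I would note that $\tfrac{1}{\sqrt N}\sum_i x_iZ_i\Longrightarrow \Normal(0,1)$ by the classical CLT (the $x_iZ_i$ are i.i.d., mean zero, variance one), whence $C(\delta)\sum_i x_iZ_i\Longrightarrow \Normal(0,\tfrac92\ell^3)$, while the diagonal part converges in probability to the constant $-\tfrac94\ell^3$ by the vanishing-variance bound above (equivalently by $\tfrac1N\sum_i x_i^2\to 1$). Slutsky's theorem then yields $L_n\Longrightarrow \tilde Z_\ell\sim \Normal(-\tfrac94\ell^3,\tfrac92\ell^3)$, a nondegenerate $\mathcal{O}(1)$ limit, so the acceptance probability converges to $a(\ell)=\EE(1\wedge e^{\tilde Z_\ell})\in(0,1)$ and $\delta=\ell N^{-1/3}$ is indeed the scale producing an $\mathcal{O}(1)$ acceptance rate. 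Condition \eqref{eqn:Zell} is then the immediate check $-2\EE(\tilde Z_\ell)=\tfrac92\ell^3=\mathrm{Var}(\tilde Z_\ell)$; note that $\tilde Z_\ell$ has the same form as the MALA limit up to the overall constant ($\tfrac92\ell^3$ here in place of $\tfrac12\ell^3$), so by the $K$-independence of the maximiser established after \eqref{eqn:Zell} the optimal acceptance probability is again $0.574$.

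I expect the only delicate point to be the cancellation bookkeeping in the drift: the two diagonal sums $A(\delta)\sum_i x_i^2$ and $B(\delta)\sum_i Z_i^2$ individually have means of order $N\delta^2\asymp N^{1/3}\to\infty$, and only their sum is finite, so one must retain $A(\delta)+B(\delta)$ to order $\delta^3$ rather than truncating at $\delta^2$. A convenient independent check --- and the conceptual reason \eqref{eqn:Zell} must hold --- is the exact Metropolis identity $\EE^{\pi^N}\EE_x[e^{L_n}]=1$, valid at stationarity for every $N$; together with uniform integrability of $\{e^{L_n}\}$ it forces $\EE(\tilde Z_\ell)+\tfrac12\mathrm{Var}(\tilde Z_\ell)=0$, i.e.\ precisely \eqref{eqn:Zell}, confirming the explicit drift computation.
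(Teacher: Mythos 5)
Your proposal is correct, and at its core it runs the same computation as the paper: expand $L_n$, use stationarity to identify the mean-zero terms, and extract the Gaussian limit at the scale $\delta=\ell N^{-1/3}$. Your limiting law $\Normal(-\tfrac94\ell^3,\tfrac92\ell^3)$ agrees with what the paper's expansion \eqref{eqn:newLn} yields (the paper's closing display ``$-\tfrac92 = 2\EE(\tilde Z_\ell)=\mathrm{Var}(\tilde Z_\ell)$'' is evidently a typo for $-2\EE(\tilde Z_\ell)=\mathrm{Var}(\tilde Z_\ell)=\tfrac92\ell^3$, which is exactly what you get). Where your execution genuinely differs, and improves on the paper's: (i) the paper obtains \eqref{eqn:newLn} as a MATHEMATICA Taylor expansion with an $\mathcal{O}(\delta^{7/2})$ remainder, whereas you observe that for the Gaussian target $L_n$ is an \emph{exact} quadratic form and compute the closed-form coefficients $A(\delta), B(\delta), C(\delta)$ --- these are correct (e.g.\ they expand to the paper's coefficients $-\tfrac{3}{\sqrt2}\delta^{3/2}$, $\tfrac32\delta^2$, $\tfrac{7}{\sqrt2}\delta^{5/2}$, $\tfrac14(8,-17)\delta^3$), so no remainder control or symbolic computation is needed, and the key cancellation $A(\delta)+B(\delta)=\mathcal{O}(\delta^3)$ is transparent; (ii) you supply the actual limit theorem (CLT for $N^{-1/2}\sum_i x_iZ_i$ together with the Chebyshev/Slutsky step for the diagonal part), which the paper asserts without proof; (iii) your closing remark that the stationarity identity $\EE^{\pi^N}\EE_x[e^{L_n}]=1$ forces $\EE(\tilde Z_\ell)+\tfrac12\mathrm{Var}(\tilde Z_\ell)=0$ gives a structural explanation of why \eqref{eqn:Zell} must hold, which the paper leaves implicit. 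The only caveat is minor: that last check requires the uniform integrability of $\{e^{L_n}\}$, which you flag but do not verify; since it serves only as a consistency check and not as part of the main argument, this is not a gap.
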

\begin{proof} As before, expanding $L_n \equiv \log \Big( {\pi^N(y) q(y,x) \over \pi^N(x) q(x,y)} \Big)$ in terms of $\delta$ yields:
\be 
L_n &= -\frac{3}{\sqrt{2}} \delta^{3/2}\sum_{i=1}^N x_i Z_i+\frac{3}{2}\delta^2 \sum_{i=1}^N \left(x^2_i - Z_i^2 \right) \\
&\quad \quad + \delta^{5/2}\frac{7}{\sqrt{2}}\sum_{i=1}^N x_i Z_i+\frac{1}{4}\delta^3 \sum_{i=1}^N \left(8z_i^2-17x^2_i\right)+\mathcal{O}\left(\delta^{7/2}\right).\label{eqn:newLn}
\ee
Again, using the fact that the chain is at stationarity, we see that the summands of $\delta^{3/2}, \delta^{2}$ and $\delta^{5/2}$ in the expansion \eqref{eqn:newLn} all have mean zero. Furthermore, for the choice of $\delta = \ell N^{-1/3}$, we have
$L_n \Longrightarrow \tilde{Z}_\ell$ 
with 
$\frac{9}{2} = -2\mathbb{E}(\tilde{Z}_\ell) = \mathrm{Var}(\tilde{Z}_\ell)$
satisfying \eqref{eqn:Zell}, and the proof is finished.
\end{proof}
While we do not prove a diffusion limit, the arguments laid out in Section \ref{sec:intuit} can be used to prove a diffusion limit for any single component of the piecewise interpolant of the proximal Markov chain described above. Consequently, Theorem \ref{thm:pmalgaus} yields that the optimal acceptance probability for proximal MALA algorithm is also $0.574$ in the case where the target measure is the product of Gaussians. 

\begin{remark}
While Theorem \ref{thm:pmalgaus} is only worked out for product of Gaussians,  the result and the heuristic arguments given in Section \ref{sec:intuit}
strongly suggest that the same optimal scale and acceptance probability should hold for a large class of measures obtained as products of smooth, log-concave target densities; this was rightly confirmed in \cite{cruc:23}. This is because the optimal scale and optimal acceptance probability results are ``universal"; the specifics of target distributions should not matter. In particular, the Gaussian distribution (as used in  Theorem \ref{thm:pmalgaus}) plays no special role in optimality of MALA and nor should play a role here. We focused on this case for clarity of exposition.
\end{remark}

\section{Infinite Dimensional Target Measure}
\label{sec:prior}
We keep the framework in this paper \textbf{identical} to that of \cite{pill:stu:12} so that the reader can easily compare our results to that of the MALA algorithm obtained in that paper. The structure of proof of the diffusion limit is also identical to that of \cite{pill:stu:12}. Recall that our main goal is to show that the proximal MALA proposal has the same performance as that of the infinite dimensional MALA algorithm studied in \cite{pill:stu:12}.
Thus we are not interested in reproving the results of \cite{pill:stu:12}; instead, we merely wish to highlight only those parts where adding a proximal term (instead of the gradient) in the MALA leads to an alteration of the proof of diffusion limit worked out in \cite{pill:stu:12}. 

Let $\h$ be a separable Hilbert space of real valued functions 
with scalar product denoted by $\bra{\cdot, \cdot}$
and associated norm $\|x\|^2 = \bra{x,x}$.
Consider a Gaussian probability measure $\pi_0$ 
on $\big(\h, \| \cdot \| \big)$ with covariance 
operator $\C$. The general theory of Gaussian 
measures \cite{Dapr:Zaby:92} ensures 
that the operator $\C$ is positive and trace class. 
Let $\{\phi_j,\lambda^2_j\}_{j \geq 1}$ be the eigenfunctions
and eigenvalues of the covariance operator $\C$:
\begin{equs}
\C\phi_j = \lambda^2_j \,\phi_j, \qquad \qquad j \geq 1.
\end{equs}
We assume a normalization under which 
the family $\{\phi_j\}_{j \geq 1}$ 
forms a complete orthonormal basis in 
the Hilbert space $\h$, which
we refer to us as the Karhunen-Lo\`eve basis.
Any function $x \in \h$ can be represented in this 
basis via the expansion
\begin{equs}
x&= \sum_{j=1}^{\infty} x_j \, \phi_j, \qquad x_j \eqdef \bra{x,\phi_j}.
\label{eqn:eigenexp}
\end{equs}
Throughout this paper we will often identify the function $x$ with 
its coordinates  $\{x_j\}_{j=1}^{\infty} \in \ell^2$ in this 
eigenbasis, moving freely between the two representations. 
The Karhunen-Lo\`eve expansion (see \cite{Dapr:Zaby:92}, 
section {\it White Noise expansions}), 
refers to the fact that a realization $x$ from the Gaussian measure $\pi_0$ can 
be expressed by allowing the coordinates $\{x_j\}_{j \geq 1}$ in
\eqref{eqn:eigenexp} to be independent random 
variables distributed as $x_j \sim \Normal(0,\lambda_j^2)$. Thus, in 
the coordinates $\{x_j\}_{j \geq 1}$, 
the Gaussian reference measure $\pi_0$ 
has a product structure.\\

For every $x \in \h$ we have the representation
\eqref{eqn:eigenexp}.
Using this expansion, we define Sobolev-like spaces $\h^r, r \in \RR$, with the inner-products and norms defined by
\begin{equation}\label{eqn:Sob}
\bra{x,y}_r \eqdef \sum_{j=1}^\infty j^{2r}x_jy_j,
\quad \qquad \|x\|^2_r \eqdef \sum_{j=1}^\infty j^{2r} \, x_j^{2}.
\end{equation}
Notice that $\h^0 = \h$ and 
$\h^r \subset \h \subset \h^{-r}$ for any $r >0$.  
The Hilbert-Schmidt norm $\|\cdot\|_\C$ associated to the covariance operator $\C$ 
is defined as
\begin{equs}
\|x\|^2_\C = \sum_j \lambda_j^{-2} x_j^2.
\end{equs}
For $x,y \in \h^r$, the outer product operator in $\h^r$
is the operator $x \otimes_{\h^r} y: \h^r \to \h^r$ 
defined by $(x \otimes_{\h^r} y) z \eqdef \bra{ y, z}_r \,x$ 
for every $z \in \h^r$. 
For $r \in \RR$, let  $B_r : \h \mapsto \h$ denote the operator which is
diagonal in the basis $\{\phi_j\}_{j \geq 1}$ with diagonal entries
$j^{2r}$. The operator $B_r$ satisfies 
$B_r \,\phi_j = j^{2r} \phi_j$ 
so that $B^{\frac12}_r \,\phi_j = j^r \phi_j$. 
The operator $B_r$ 
lets us alternate between the Hilbert space $\h$ and the
Sobolev  spaces $\h^r$ via the identities
$\bra{x,y}_r = \bra{ B^{\frac12}_r x,B^{\frac12}_r y }$.
Since $\|B_r^{-1/2} \phi_k\|_r = \|\phi_k\|=1$, 
we deduce that $\{B^{-1/2}_r \phi_k \}_{k \geq 0}$ forms an 
orthonormal basis for $\h^r$.
For a positive, self-adjoint operator $D : \h \mapsto \h$, 
we define its trace in $\h^r$ by 
\begin{equs}
\label{eqn:trace}
\tr_{\h^r}(D) 
\eqdef \sum_{j=1}^\infty \bra{ (B_r^{-\frac{1}{2}} \phi_j), D (B_r^{-\frac{1}{2}} \phi_j) }_r.
\end{equs}
Since $\tr_{\h^r}(D)$ does not depend on the orthonormal basis, 
the operator $D$ is said to be trace class in $\h^r$ if $\tr_{\h^r}(D) < \infty$ for
some, and hence any, orthonormal basis of $\h^r$.
Let us define the operator $\C_r \eqdef B^{1/2}_r \,\C \,B^{1/2}_r$. 
Notice that $ \tr_{\h^r}(\C_r)=\sum_{j=1}^\infty \lambda^2_j\,j^{2r}$.
In \cite{pill:stu:12} it is shown that under the condition
\begin{equs}
\label{eq.finite.trace}
\tr_{\h^r}(\C_r) < \infty, 
\end{equs}
the support of $\pi_0$ is included in $\h^r$ in the sense that 
$\pi_0$-almost every function $x \in \h$ belongs to $\h^r$.
Furthermore, the induced distribution of $\pi_0$ on $\h^r$ is 
identical to that of a centered Gaussian measure 
on $\h^r$ with covariance operator $\C_r$. 
For example, if $\xi \dist \pi_0$, 
then $\EE\big[ \bra{\xi,u}_r \bra{\xi,v}_r \big] = \bra{u,\C_r v}_r$ 
for any functions $u,v \in \h^r$.
Thus in what follows, we alternate between 
the Gaussian measures $\Normal(0,\C)$ on $\h$ 
and  $\Normal(0,\C_r)$ on $\h^r$, for those $r$ for
which \eqref{eq.finite.trace} holds.

\subsection{Change of Measure}
\label{ssec:com}

Our goal is to sample from a measure $\pi$  
defined through the change of probability formula 
\eqref{eqn:targmeas}. As described above, 
the condition $\tr_{\h^r}(\C_r) < \infty$ implies that 
the measure $\pi_0$ has full support on $\h^r$, 
\textit{i.e.}, $\pi_0(\h^r)=1$. Consequently, 
if $\tr_{\h^r}(\C_r) < \infty$, the functional 
$\Psi(\cdot)$ needs only to be defined on $\h^r$ 
in order for the change of probability formula 
\eqref{eqn:targmeas} to be valid. In this section 
we give assumptions on the decay of the
eigenvalues of the covariance operator $\C$ of $\pi_0$ 
that ensure the existence of a real 
number $s>0$ such that $\pi_0$ has full support 
on $\h^s$. The functional $\Psi(\cdot)$ is assumed 
to be defined on $\h^s$ and we impose regularity 
assumptions on $\Psi(\cdot)$ that ensure that the 
probability distribution $\pi$ is not too different 
from $\pi_0$, when projected into directions associated
with $\varphi_j$ for $j$ large.
For each $x \in \h^s$ the derivative $\nabla \Psi(x)$
is an element of the dual $(\h^s)^*$ of $\h^s$ 
comprising linear functionals on $\h^s$.
However, we may identify $(\h^s)^*$  with $\h^{-s}$ 
and view $\nabla \Psi(x)$
as an element of $\h^{-s}$ for each $x \in \h^s$. With this identification,
the following identity holds
\begin{equs}
\| \nabla \Psi(x)\|_{\mathcal{L}(\h^s,\RR)} = \| \nabla \Psi(x) \|_{-s} 
\end{equs}
and the second derivative $\partial^2 \Psi(x)$ can 
be identified as an element 
of $\mathcal{L}(\h^s, \h^{-s})$.
To avoid technicalities we assume that $\Psi(\cdot)$ is quadratically bounded, 
with first derivative linearly bounded and second derivative globally 
bounded. Weaker assumptions could be dealt with by use of stopping time 
arguments. 
\begin{assumptions} \label{ass:1}
The covariance operator $\C$ and functional $\Psi$ satisfy the following:
\begin{enumerate}
\item {\textbf{Decay of Eigenvalues $\lambda_j^2$ of $\C$:}} 
there is an exponent $\kappa > \frac{1}{2}$ such that
\begin{equs} 
\label{eq.eigenvalue.decay}
\lambda_j \asymp j^{-\kappa}.
\end{equs}
\item {\textbf{Assumptions on} $\Psi$:} 
The function $\Psi$ is convex.
There exist constants $M_i \in \RR_+, i \leq 4$ and $s \in 
[0, \kappa - 1/2)$ such that for all $x \in \h^s$ the functional 
$\Psi:\h^s \to \RR$ satisfies
\begin{equs}
M_1 \leq \Psi(x) &\leq  M_2 \, \Big(1 +  \|x\|_s^2\Big)     \label{eqn:psi1}\\
\| \nabla \Psi(x)\|_{-s} &\leq M_3 \, \Big(1 + \|x\|_s\Big) \label{eqn:psi2}\\
\opnm{\partial^2 \Psi(x)}{s}{-s} & \leq M_4. \label{eqn:psi3}
\end{equs}
\end{enumerate}
\end{assumptions}

\noindent
\begin{remark}
The convexity of $\Psi$ is not assumed in \cite{pill:stu:12}. It is not required for the MALA algorithm. In this paper we assume the convexity of $\Psi$ so as to get a unique value for the proximal operator. This assumption is not strictly necessary for our methods to go through. However, since our key aim is to formalize the observation made in \eqref{eqn:malaform}, we avoid additional complications. 
\end{remark}
\begin{remark}
The condition $\kappa >\frac{1}{2}$ ensures that the covariance operator $\C$ 
is trace class in $\h$.
In fact, Equation \eqref{eq.finite.trace} shows that $\C_r$ 
is trace-class in $\h^r$ for any $r < \kappa - \frac{1}{2}$.
It follows that $\pi_0$ has full measure in $\h^r$ 
for any $r \in [0, \kappa - 1/2)$. In particular $\pi_0$ has full support on $\h^s$.
\end{remark}
\begin{remark}
The functional $\Psi(x)  = \frac{1}{2}\|x\|_s^2$ 
satisfies Assumptions \ref{ass:1}.
It is convex, defined on $\h^s$ and its derivative at $x \in \h^s$
is given by $\nabla \Psi(x) = \sum_{j \geq 0} j^{2s} x_j \phi_j \in \h^{-s}$ with  
$\|\nabla \Psi(x)\|_{-s} = \|x\|_s$. 
The second derivative $\partial^2 \Psi(x) \in \mathcal{L}(\h^s, \h^{-s})$ 
is the linear operator that maps $u \in \h^s$ to $\sum_{j \geq 0} j^{2s} \bra{u,\phi_j} \phi_j \in \h^s$:
its norm satisfies $\| \partial^2 \Psi(x) \|_{\mathcal{L}(\h^s, \h^{-s})} = 1$ for any $x \in \h^s$.
\end{remark}
%

\subsection{Finite Dimensional Approximation}
\label{ssec:approx}
We are interested in finite dimensional approximations 
of the probability distribution $\pi$.
To this end, we introduce the vector space spanned by 
the first $N$ eigenfunctions of the covariance operator,
\begin{equs}
X^N \eqdef {\hbox {span}}\big\{ \phi_1, \phi_2, \ldots, \phi_N \big\}.
\end{equs}
Notice that $X^N \subset \h^r$ for any $r \in [0; +\infty)$. In particular, 
$X^N$ is a subspace of $\h^s$.
Next, we define $N$-dimensional approximations 
of the functional $\Psi(\cdot)$ 
and of the reference measure $\pi_0$. To this end,
we introduce the orthogonal projection on $X^N$ 
denoted by $P^N : \h^s \mapsto X^N \subset \h^s$.
The functional $\Psi(\cdot)$ is approximated by the 
functional $\Psi^N: X^N \mapsto \RR$ defined by 
\begin{equs}
\label{eq.psi.N}
\Psi^N \eqdef \Psi \circ P^N.
\end{equs}
The approximation $\pi_0^N$ of the reference measure $\pi_0$ is the Gaussian 
measure on $X^N$ given by the law of the random variable
\begin{equs}
\pi_0^N \; \dist \; \sum_{j=1}^N \lambda_j \xi_j \phi_j
\; = \; (\C^N)^{\frac12} \, \xi^N
\end{equs}
where $\xi_j$ are i.i.d standard Gaussian random variables, $\xi^N = \sum_{j=1}^N \xi_j \phi_j$
and $\C^N = P^N \circ \C \circ P^N$. Consequently we have 
$\pi_0^N = \Normal(0, \C^N)$.
Finally, one can define the approximation $\pi^N$ 
of $\pi$ by the change of probability formula
\begin{equs} \label{eq:target3}
\frac{d\pi^{N}}{d\pi^N_0}(x) = M_{\Psi^N} \exp \big(-\Psi^N(x)\big)
\end{equs}
where $M_{\Psi^N}$ is a normalization constant. Notice that 
the probability distribution $\pi^N$ is supported on $X^N$ 
and has Lebesgue density\footnote{For ease of notation we do not distinguish 
between a  measure and its density, nor do we 
distinguish between the representation of the 
measure in $X^N$ or in coordinates in $\RR^N$} 
on $X^N$ equal to
\begin{equs}\label{eqn:pitrunc}
\pi^N(x) \; \propto \;
\exp \Big( -\frac{1}{2} \|x\|^2_{\C^N} -\Psi^N(x) \Big).
\end{equs}
In formula \eqref{eqn:pitrunc},
the Hilbert-Schmidt norm $\| \cdot \|_{\C^N}$ on $X^N$ is given by the scalar product
$\bra{u,v}_{\C^N} = \bra{u, (\C^N)^{-1} v}$ for all $u,v \in X^N$.
The operator $\C^{N}$ is invertible on $X^N$ because the 
eigenvalues  of $\C$ are assumed to be strictly positive.
The quantity $\C^N \nabla \log \pi^N(x)$ is repeatedly used in the text and
in particular appears in the function $\mu^N(x)$ given by
\begin{equs} \label{eqn:mun}
\mu^N(x) = -\Big( P^N x + \C^N \nabla \Psi^N(x) \Big)
\end{equs}
which is $\C^N \nabla \log \pi^N(x).$
This function is the drift of an ergodic Langevin diffusion 
that leaves $\pi^N$ invariants. Similarly, one defines the 
function $\mu: \h^s \to \h^s$ given by
\begin{equs}
\label{eqn:mu}
\mu(x) = - \Big( x + \C \nabla \Psi(x) \Big)
\end{equs}
which is 
$\C \nabla \log \pi(x)$.
In Lemmas 4.1 and 4.3 of \cite{pill:stu:12}, it is shown that for $\pi_0$-almost every 
function $x \in \h$, we have $\lim_{N \to \infty} \mu^N(x) = \mu(x)$; see Section \ref{sec:approx} below. 
This quantifies the manner in which $\mu^N(\cdot)$ is an approximation of 
$\mu(\cdot)$.

The next lemma gathers 
various regularity estimates on the functional $\Psi(\cdot)$ 
and $\Psi^N( \cdot)$ that are repeatedly used in the sequel. 
These are simple consequences of Assumptions \ref{ass:1} and proofs 
can be found in \cite{Matt:Pill:Stu:11}  and \cite{pill:stu:12}.
\begin{lemma} \label{lem:regularity} {\bf (Properties of $\Psi$)} 
Let the functional $\Psi(\cdot)$ satisfy Assumptions \ref{ass:1} and 
consider the functional $\Psi^N(\cdot)$ defined by Equation \eqref{eq.psi.N}. 
The following estimates hold.
\begin{enumerate}
\item
The functionals $\Psi^N: \h^s \to \RR$ satisfy the same conditions
imposed on $\Psi$ given by Equations \eqref{eqn:psi1},
\eqref{eqn:psi2} and \eqref{eqn:psi3} 
with constants that can be chosen independent of $N$.
\item The function $\C \nabla \Psi: \h^s \to \h^s$ is globally Lipschitz on $\h^s$:
there exists a constant $M_5 > 0$ such that
\begin{equs}
\| \C\nabla \Psi(x) - \C\nabla \Psi(y) \|_s \leq M_5 \, \| x-y \|_s
\qquad \qquad \forall x,y \in \h^s.
\end{equs}
Moreover, the functions $\C^N \nabla \Psi^N: \h^s \to \h^s$ also satisfy 
this estimate with a constant that can be chosen independent of $N$.
\item
The functional $\Psi(\cdot): \h^s \to \RR$ satisfies a ``one-sided" 
Taylor formula\footnote{We extend $\langle \cdot,\cdot \rangle$ from an
inner-product on $\h$ to the dual pairing between
$\h^{-s}$ and $\h^s$.}. There exists a constant $M_6 > 0$ such that
\begin{equs} \label{eqn:2nd-Taylor}
\Psi(y) - \Big( \Psi(x) + \bra{\nabla \Psi(x), y-x} \Big)
\leq M_6 \, \|x - y \|_s^2 
\qquad \forall x,y \in \h^s.
\end{equs}
Moreover, the functionals $\Psi^N(\cdot)$ also satisfy 
the above estimates with a constant that can be chosen independent of $N$.
\end{enumerate}
\end{lemma}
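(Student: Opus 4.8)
The plan is to reduce all three estimates to two structural facts about the setup. First, the orthogonal projection $P^N$ is a contraction on every Sobolev scale $\h^r$, since $\|P^N x\|_r^2 = \sum_{j=1}^N j^{2r} x_j^2 \leq \|x\|_r^2$; this lets me transfer any bound on $\Psi$ to the truncated functional $\Psi^N = \Psi \circ P^N$ with the \emph{same} constants, hence uniformly in $N$. Second, the eigenvalue decay \eqref{eq.eigenvalue.decay} together with the restriction $s \in [0, \kappa - 1/2)$ makes $\C$ a bounded smoothing map from $\h^{-s}$ into $\h^s$: since $\C$ acts diagonally with entries $\lambda_j^2 \asymp j^{-2\kappa}$, one checks $\|\C y\|_s^2 = \sum_j j^{2s}\lambda_j^4 y_j^2 \lesssim \sum_j j^{-2s} y_j^2 = \|y\|_{-s}^2$, because $j^{2s}\lambda_j^4 \lesssim j^{-2s}$ is equivalent to $\lambda_j \lesssim j^{-s}$, which holds as $s < \kappa$. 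This bounded operator, with finite norm $\opnm{\C}{-s}{s}$, is the key quantitative input and is the only place the eigenvalue hypothesis is actually used.

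For part (1), I would compute the derivatives of $\Psi^N$ via the chain rule and the self-adjointness of $P^N$, obtaining $\nabla \Psi^N(x) = P^N \nabla\Psi(P^N x)$ and $\partial^2 \Psi^N(x) = P^N\, \partial^2\Psi(P^N x)\, P^N$. The lower bound $M_1 \leq \Psi^N$ and the quadratic upper bound \eqref{eqn:psi1} then follow immediately from $\|P^N x\|_s \leq \|x\|_s$. The linear bound \eqref{eqn:psi2} follows from $\|P^N \nabla\Psi(P^N x)\|_{-s} \leq \|\nabla\Psi(P^N x)\|_{-s}$ and then applying \eqref{eqn:psi2} at the point $P^N x$. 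The second-derivative bound \eqref{eqn:psi3} follows because $P^N$ is a contraction in both $\h^s$ and $\h^{-s}$, so the $\mathcal{L}(\h^s,\h^{-s})$-norm of $P^N\,\partial^2\Psi(P^N x)\,P^N$ is controlled by $M_4$. Since none of these manipulations introduces an $N$-dependent factor, the constants are uniform in $N$.

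For parts (2) and (3), I would use the integral form of the remainder. Writing $\nabla\Psi(x) - \nabla\Psi(y) = \int_0^1 \partial^2\Psi(y + t(x-y))(x-y)\,dt$ and applying $\C$, the Lipschitz estimate follows from $\|\C \partial^2\Psi(\cdot)(x-y)\|_s \leq \opnm{\C}{-s}{s}\, M_4\, \|x-y\|_s$, so one may take $M_5 = \opnm{\C}{-s}{s}\, M_4$. The Taylor inequality \eqref{eqn:2nd-Taylor} comes from the second-order remainder $\Psi(y) - \Psi(x) - \bra{\nabla\Psi(x), y-x} = \int_0^1 (1-t)\,\bra{\partial^2\Psi(x+t(y-x))(y-x),\,y-x}\,dt$, bounded in absolute value by $\tfrac12 M_4 \|y-x\|_s^2$ through the dual pairing between $\h^{-s}$ and $\h^s$; thus $M_6 = M_4/2$. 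Both estimates transfer to $\Psi^N$ and to $\C^N \nabla\Psi^N$ by combining part (1) with the observation that $\C^N = P^N \C P^N$ inherits the same smoothing bound uniformly in $N$.

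The computations are all routine; the only point requiring genuine care is the boundedness of $\C$ as a map $\h^{-s}\to\h^s$, since this is precisely where the decay rate $\kappa$ and the admissible range $s \in [0,\kappa-1/2)$ enter. Everything else reduces to the chain rule, the contraction property of the projections, and Taylor's theorem with integral remainder, each of which is insensitive to the dimension $N$.
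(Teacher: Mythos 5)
Your proof is correct, and it follows essentially the same route as the argument the paper relies on (the paper itself defers the proof to \cite{Matt:Pill:Stu:11} and \cite{pill:stu:12}): the contraction property of $P^N$ on $\h^r$ and on $\h^{-s}$ gives part (1) with $N$-independent constants, the diagonal bound $\opnm{\C}{-s}{s} < \infty$ (which indeed needs only $\lambda_j \lesssim j^{-s}$, guaranteed by $s < \kappa$) combined with \eqref{eqn:psi3} and the integral form of the remainder gives parts (2) and (3) with $M_5 = \opnm{\C}{-s}{s} M_4$ and $M_6 = M_4/2$. No gaps; your identification of $\nabla \Psi^N(x) = P^N \nabla \Psi(P^N x)$, $\partial^2 \Psi^N(x) = P^N \partial^2 \Psi(P^N x) P^N$ and of $\C^N = P^N \C P^N$ inheriting the smoothing bound is exactly the mechanism that makes all constants uniform in $N$.
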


\begin{remark}
\label{rem:ts}
The regularity Lemma \ref{lem:regularity} shows in particular that the 
function $\mu: \h^s \to \h^s$ defined by \eqref{eqn:mu} 
is globally Lipschitz on $\h^s$. 
Similarly, it follows that $\C^N \nabla \Psi^N:\h^s \to \h^s$ 
and $\mu^N:\h^s \to \h^s$ given by \eqref{eqn:mun}
are globally Lipschitz with Lipschitz constants that 
can be chosen uniformly in $N$.
\end{remark}

\section{The proximal MALA in Hilbert space}
In this section, we construct a version of the proximal MALA algorithm of \cite{pere:16} in 
the Hilbert space $\h^s$. The proximal operators are well defined in an infinite dimensional Hilbert space. The reader is referred to \cite{baus:comb:11} for a book length treatment.
For a function $g: \h^s \mapsto (-\infty, \infty]$ and $\lambda > 0$, define the proximal function
\be \label{eqn:prox}
\prox_g^\lambda(x) = \argmin_{y \in \h^s} \Big(g(y) + \frac{1}{2\lambda} \|x - y\|^2_s \Big).
\ee
If $g$ is convex, Proposition 12.15 of \cite{baus:comb:11} yields that
$\prox_g^\lambda(x)$ is convex and differentiable. Moreover the minimizer in \eqref{eqn:prox} is unique due to the convexity of $g$. 
Similar to \eqref{eqn:Morenv}, define the function $\mathrm{E}^\lambda_g$:
\be \label{eqn:Morenv1}
\mathrm{E}^\lambda_g (x) \propto \exp \Big\{-g\Big(\prox_{g}^\lambda(x)\Big)\Big\}\exp\Big \{ -\frac{1}{2 \lambda} \|\prox_{g}^\lambda(x) - x \|^2\Big\}.
\ee
The function  $-\log \mathrm{E}^\lambda_g(x)$
is the $\lambda$-Moreau-Yoshida envelope of $g$. 
We also have the identity (\cite{baus:comb:11}, Proposition 12.29 and Corollary 17.6):
\be \label{eqn:proxgradid}
-\nabla \log \mathrm{E}^\lambda_g(x) = \frac{1}{\lambda}(x - \prox_{g}^\lambda(x)) = \nabla g(\prox_{g}^\lambda(x)).
\ee

\subsection{The Proximal-MALA Algorithm}\label{sec:algorithm}
Recall from \eqref{eqn:pitrunc} that our target measure is
\begin{equs}
\pi^N(x) \; \propto \;
\exp \Big( -\frac{1}{2} \|x\|^2_{\C^N} -\Psi^N(x) \Big).
\end{equs}

Our algorithm is motivated by the fact that the probability 
measure $\pi^N$ defined by Equation \eqref{eq:target3} 
is invariant with respect to the Langevin diffusion process
\begin{equs} \label{eqn:spdeN}
\frac{dz}{dt} &= \mathcal{C}^N \nabla \log \pi^N(z) + \sqrt{2}\,\frac{dW^N}{dt} \\
&= \C^N  \mu^N(z) + \sqrt{2}\,\frac{dW^N}{dt}
\end{equs}
where $W^N$ is a  Brownian motion in 
$\h^s$ with covariance operator $\C^N$ and $\mu^N$ is as defined in \eqref{eqn:mun}. 

To obtain a proximal algorithm that is analogous to Pereyra's algorithm given in  \eqref{eqn:proxmala}, we replace $\Psi^N(x)$ by its $\delta$-Moreau-Yoshida envelope to obtain:
\be
\pi^N_\lambda(x) 
&\propto \exp \Big( -\frac{1}{2} \|x\|^2_{\C^N} \Big) \mathrm{E}^\lambda_{\Psi^N} (x)
\ee
where $\mathrm{E}^\delta_{\Psi^N} (x)$  is defined as in Equation \eqref{eqn:Morenv1} with
$g = \Psi^N$. 
Now the usual MALA proposal for $\pi^N_\lambda(x)$ gives 
the analogue to Pereyra's proximal algorithm. 
Indeed, the MALA proposal for $\pi^N_\lambda(x)$ (with the choice of $\lambda = \delta$) gives
\begin{equs}\label{eqn:pproposal0}
y &  = x + \delta \, \C^N \nabla \log \pi^N_{\lambda}(x)  +  \sqrtd{2 \delta } \, (\C^N)^{\frac{1}{2}}\xi^N \\
&  = x + \delta \, \C^N \Big(- (\C^N)^{-1} x + \nabla \log \mathrm{E}^\delta_{\Psi^N} (x)\Big)  +  \sqrtd{2 \delta } \, (\C^N)^{\frac{1}{2}}\xi^N \\
&= (1 - \delta - \C^N)x + \C^N \prox_{\Psi^N}^{\delta}(x)  + \sqrtd{2 \delta } \, (\C^N)^{\frac{1}{2}}\xi^N \qquad 
\delta = \ell N^{-\frac{1}{3}} \\
&\equiv \xprox
\end{equs}
where the third equality follows from Equations \eqref{eqn:Morenv1} and \eqref{eqn:proxgradid}.


 Applying \eqref{eqn:proxgradid} with $\Psi  = \Psi^N$ and $\lambda = \delta$, we obtain that
\be
\prox_{\Psi^N}^{\delta}(x) &= x - \delta \nabla{\Psi^N}(\prox_{\Psi^N}^{\delta}(x)) \\
&\approx x - \delta \nabla{\Psi^N}(x) + \mathcal{O}(\delta^2).
\ee
Consequently, on $X^N$,
\be
(1 - \delta - \C^N)x + \C^N \prox_{\Psi^N}^{\delta}(x) & \approx  x - \delta(P^N x + \C^N \nabla \Psi^N(x)) \\
& = x + \delta \mu^N(x). \label{eqn:langapp}
\ee
Let
\be \label{eqn:xmala}
\xmala &= x + \delta \mu^N(x)
+ \sqrtd{2 \delta} \, (\C^N)^{\frac{1}{2}}\xi^N
\qquad\text{where} \qquad 
\delta = \ell N^{-\frac{1}{3}}
\ee
denote the usual MALA proposal obtained from the Euler discretization of the infinite dimensional diffusion \eqref{eqn:spdeN}.
Notice that $(\C^N)^{\frac{1}{2}}\xi^N \dist \Normal(0, \C^N)$.
The calculation done in \eqref{eqn:langapp} shows that our proximal MALA proposal \eqref{eqn:pproposal0} closely tags the MALA proposal:
\be \label{eqn:pmala-mala}
\xprox &=  \xmala + \rr^N(x,\delta) 
\end{equs}
where the term
\be \label{eqn:remx}
\rr^N(x,\delta) \equiv \delta\, \C^N \Big(\prox_{\Psi^N}^{\delta}(x) -  x \Big)
\ee
can be thought of as the added ``error" induced by the proximal MALA proposal as compared to MALA. As shown in Lemma \ref{lem:Remest},  we have  $\|\rr^N(x,\delta)\|_{\C^N} \lesssim \delta^2 (1+ \|x\|_s) = \mathcal{O}(\delta^2)$. As in the product measure case, for optimal scaling only terms of $\mathcal{O}(\delta^{3/2})$ and lower order contribute; thus the contribution from this remainder term to the scaling drops out in the large $N$ limit. Consequently, the optimal scaling and the diffusion limits for the proximal MALA algorithm follow from the corresponding results for the MALA algorithm. 

For streamlining further calculations, we will write the $\xprox$ proposal from \eqref{eqn:pproposal0} as
\be \label{eqn:proposal}
y= x + \delta \mu^N(x) + \rr^N(x,\delta) +
 \sqrtd{2 \delta} \, (\C^N)^{\frac{1}{2}}\xi^N
\qquad\text{where} \qquad 
\delta = \ell N^{-\frac{1}{3}}.
\ee

\subsection{Time evolution of the proximal MALA chain}
\noindent We introduce a related parameter 
$$\Delta t:=\ell^{-1}\delta=N^{-\frac13}$$
which will be the natural time-step for the
limiting diffusion process derived from the proposal
above, after inclusion of an accept-reject mechanism.
The scaling of $\Delta t$, and hence $\delta,$ 
with $N$ will ensure
that the average acceptance probability is $\mathcal{O}(1)$
as $N$ grows. 

Following \cite{pill:stu:12}, we will study the Markov chain $x^N = \{x^{k,N}\}_{k \geq 0}$
resulting from Metropolizing the proximal proposal \eqref{eqn:proposal} when it is 
started at stationarity: 
the initial position $x^{0,N}$ is distributed as $\pi^N$ and
thus lies in $X^N$. Therefore, the Markov chain evolves 
in $X^N$; as a consequence, only the first $N$ components of 
an expansion in the eigenbasis of $\C$ 
are nonzero and the algorithm can be implemented in $\RR^N$. However the analysis is cleaner 
when written in $X^N \subset \h^s$. 
The acceptance probability only depends on 
the first $N$ coordinates of $x$ and $y$ and has the form
\begin{equs} \label{eqn:accprob}
\alpha^N(x,\xi^N) = 1 \wedge \frac{\pi^N(y) T^N(y,x)}{\pi^N(x) T^N(x,y)} = 1 \wedge e^{ Q^N(x,\xi^N) }
\end{equs}
where the proposal $y$ is given by Equation \eqref{eqn:proposal}.
The function $T^N(\cdot, \cdot)$ is the density of the Langevin proposals \eqref{eqn:proposal} 
and is given by
\begin{equs}
T^N(x,y) \propto \exp\Big\{ -\frac{1}{4  \delta} \| y - x -  \delta\mu^N(x)  - \rr^N(x,\delta) \|^2_{\C^N} \Big\}.
\end{equs}
The local mean acceptance probability $\alpha^N(x)$ is
defined by
\begin{equs} \label{e.loc.accept}
\alpha^N(x) = \EE_x\big[ \alpha^N(x,\xi^N) \big]. 
\end{equs}
It is the expected acceptance probability when the algorithm stands 
at $x \in \h$.
The Markov chain $x^N = \{x^{k,N}\}_{k \geq 0}$ can also be expressed as
\begin{equs} \label{eqn:MALA_algorithm}
\left\{
    \begin{array}{ll}
    y^{k,N}	&= x^{k,N} +  \delta \mu^N(x^{k,N}) + \rr^N(x^{k,N},\delta) +
    \sqrtd{2  \delta} \, (\C^N)^{\frac{1}{2}} \; \xi^{k,N} \\
    x^{k+1,N}	&= \gamma^{k,N} y^{k,N} + (1- \gamma^{k,N})\, x^{k,N} \\
    \end{array}
\right.
\end{equs}
where $\xi^{k,N}$ are i.i.d samples distributed as $\xi^N$ and 
$\gamma^{k,N} = \gamma^N(x^{k,N},\xi^{k,N})$ creates a
Bernoulli random sequence 
with $k^{th}$ success probability $\alpha^N(x^{k,N},\xi^{k,N})$.
We may view the Bernoulli random variable as 
$\gamma^{k,N} = 1_{\{U^k < \alpha^N(x^{k,N},\xi^{k,N})\}}$ where 
$U^k \dist \text{Uniform}(0,1)$ is independent from $x^{k,N}$ and $\xi^{k,N}$. \\

In summary, the Markov chain that we have described in $\h^s$ is, 
when projected onto $X^N$, equivalent to a proximal
MALA algorithm on $\RR^N$ for the Lebesgue density \eqref{eqn:pitrunc}. 
Recall that the target measure $\pi$ in \eqref{eqn:targmeas} is the
invariant measure of the SPDE \eqref{eqn:spde}.  Our goal is to obtain an 
invariance principle for the continuous interpolant \eqref{eqn:MCMCe} 
of the Markov chain $x^N = \{x^{k,N}\}_{k \geq 0}$ started in stationarity, 
$i.e$, to show weak convergence in $C([0,T]; \h^s)$ of 
$z^{N}(t)$ to the solution $z(t)$ of the SPDE \eqref{eqn:spde}, 
as the dimension $N \rightarrow \infty$.

\section{Main Result}
In this section, we present the main result of this paper. 
Consider the constant $\alpha(\ell) =  \EE\big[ 1 \wedge e^{Z_{\ell}}\big]$ where
$Z_{\ell} \dist \Normal(-\frac{\ell^3}{4}, \frac{\ell^3}{2})$ and define the speed function
\begin{equs} \label{eqn:speedfun}
h(\ell) = \ell \alpha(\ell).
\end{equs}
The quantity $\alpha(\ell)$ represents the limiting expected acceptance probability of the MALA algorithm
while $h(\ell)$ is the asymptotic speed function of the limiting diffusion. 
\begin{theorem}\label{thm:main}
Let the initial condition $x^{0,N}$ of the proximal MALA algorithm be such that $x^{0,N} \sim \pi^N$ and let $z^N(t)$ be a piecewise linear, continuous interpolant of the proximal MALA algorithm \eqref{eqn:MALA_algorithm} with $\Delta t = N^{-1/3}$. Then, for any $T > 0$, $z^N(t)$ converges weakly in $C([0,T],\h^s)$ to the diffusion process $z(t)$ given 
by
\begin{equs} \label{eqn:spdelim}
\frac{dz}{dt} &= - h(\ell) \big(z + \C \nabla \Psi(z) \big)+ \sqrt{2\,h(\ell)}\,\frac{dW}{dt},
\quad z(0)=z^0 \sim \pi
\end{equs} 
with the constant $h(\ell)$ as given in \eqref{eqn:speedfun}. Choosing $\ell$ so as to maximize the speed function $h(\ell)$ leads to the acceptance probability of 0.574 for the proximal MALA algorithm. 
\end{theorem}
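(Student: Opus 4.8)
The plan is to run the diffusion-limit machinery of \cite{pill:stu:12} for the MALA algorithm essentially verbatim, using the decomposition \eqref{eqn:pmala-mala}, $\xprox = \xmala + \rr^N(x,\delta)$ with $\|\rr^N(x,\delta)\|_{\C^N}\lesssim\delta^2(1+\|x\|_s)$ (Lemma \ref{lem:Remest}), to reduce everything to the known MALA result. Since the continuous interpolant of the MALA chain is already proved to converge weakly in $C([0,T];\h^s)$ to the SPDE \eqref{eqn:spdelim}, the only new work is to show that the extra $\mathcal{O}(\delta^2)$ term $\rr^N$ perturbs neither the limiting acceptance probability nor the drift--martingale decomposition of the chain, and hence that the proximal chain has the same scaling limit.

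First I would analyze the log acceptance ratio $Q^N(x,\xi^N)$ from \eqref{eqn:accprob}. Expanding $Q^N$ in powers of $\delta$, exactly as in the product-Gaussian computation \eqref{eqn:Ln}, the remainder $\rr^N$ enters only through the proposal density $T^N$, multiplied by factors that are themselves $\mathcal{O}(\delta)$, so its net contribution to $Q^N$ is $\mathcal{O}(\delta^2)$ and higher. Using stationarity together with uniform moment bounds $\EE^{\pi^N}\big[\|x\|_s^p\big]\lesssim 1$ and the regularity estimates of Lemma \ref{lem:regularity}, I would show $Q^N(x^{0,N},\xi^N)\Longrightarrow Z_\ell\sim\Normal(-\tfrac{\ell^3}{4},\tfrac{\ell^3}{2})$ and hence $\alpha^N(x)\to\alpha(\ell)$, precisely as for MALA; the point is that the $\mathcal{O}(\delta^2)$ contribution of $\rr^N$ is dominated by the $\mathcal{O}(\delta^{3/2})$ fluctuation term and vanishes in the limit.

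Next I would carry out the drift--martingale decomposition of the one-step increment $x^{k+1,N}-x^{k,N}=\gamma^{k,N}\big(\delta\mu^N(x)+\rr^N+\sqrtd{2\delta}(\C^N)^{\frac12}\xi^N\big)$. Dividing the conditional mean by $\Delta t=\ell^{-1}\delta$, the MALA part yields the drift $\ell\,\alpha^N(x)\,\mu^N(x)$, which by the previous step and the drift approximation $\mu^N\to\mu$ (Lemma \ref{lem.drift.approx}) converges to $h(\ell)\mu(x)=-h(\ell)(x+\C\nabla\Psi(x))$, matching \eqref{eqn:spdelim}; the remainder contributes $\Delta t^{-1}\EE_x[\gamma^{k,N}\rr^N]=\mathcal{O}(\delta^{-1}\cdot\delta^2)=\mathcal{O}(\delta)\to 0$. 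The leading quadratic-variation term comes from the Gaussian increment and equals $2h(\ell)\C^N$, with $\rr^N$ contributing only $o(1)$. Tightness in $C([0,T];\h^s)$ and identification of the limit as the unique solution of \eqref{eqn:spdelim} then transfer directly from \cite{pill:stu:12}. Finally, since $\alpha(\ell)$ is exactly the MALA acceptance function, maximizing $h(\ell)=\ell\,\alpha(\ell)$ is the same optimization problem as for MALA and gives the optimal acceptance probability $0.574$.

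The main obstacle is the uniform-in-$N$ control of the proximal remainder: I must establish $\|\rr^N(x,\delta)\|_{\C^N}\lesssim\delta^2(1+\|x\|_s)$ with constants independent of $N$, which rests on the global bound $\opnm{\partial^2\Psi(x)}{s}{-s}\le M_4$ together with the implicit characterization \eqref{eqn:proxgradid} of $\prox_{\Psi^N}^{\delta}$ as a contraction, and then propagate this estimate through both $Q^N$ and the drift under the stationary measure. Verifying that these bounds and the accompanying stationary moment estimates are genuinely uniform in $N$ is the delicate technical core; once it is in place, every step above is a routine adaptation of the corresponding MALA argument.
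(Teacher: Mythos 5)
Your proposal is correct and follows essentially the same route as the paper: the decomposition $\xprox = \xmala + \rr^N(x,\delta)$ of \eqref{eqn:pmala-mala}, the uniform-in-$N$ estimate $\|\rr^N(x,\delta)\|_{\C^N} \lesssim \delta^2(1+\|x\|_s)$ of Lemma \ref{lem:Remest}, the Gaussian approximation of $Q^N$ with the proximal correction entering only at order $\OO(\delta^2)$ (Lemma \ref{lem:Gaussian_approx}), the drift--martingale decomposition in which the proximal error term is $\OO(\Delta t)$ and vanishes (Lemma \ref{lem.drift.approx}), and the transfer of the invariance principle and the Master Theorem of \cite{pill:stu:12}. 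The only cosmetic difference is that you propose deriving the key prox estimate from the second-derivative bound \eqref{eqn:psi3} and the implicit identity \eqref{eqn:proxgradid}, whereas the paper obtains it from the variational (minimizer) property of $\prox_{\Psi^N}^{\delta}$ together with \eqref{eqn:2nd-Taylor} and \eqref{eqn:psi2} (Lemma \ref{lem:proxfunest}); both yield the same bound.
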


\begin{remark}
The fact that choosing $\ell$ so as to maximize the speed function $h(\ell)$ leads to the optimal universal acceptance probability of 0.574 is known since 
\cite{Robe:Rose:98}, and is also shown in \cite{pill:stu:12}. Thus to prove Theorem \ref{thm:main}, we need only establish the diffusion limit.
\end{remark}

\subsection{Proof Strategy}
The acceptance probability of the proposal  \eqref{eqn:proposal} is equal 
to $\alpha^N(x,\xi^N) = 1 \wedge e^{Q^N(x,\xi^N)}$ and the quantity 
$\alpha^N(x) = \EE_x[\alpha^N(x, \xi^N)]$ given by
\eqref{e.loc.accept} represents the mean acceptance 
probability when the Markov chain $x^N$ stands at $x$.
Recall the quantity $Q^N$ in Equation \eqref{eqn:accprob}. 
This quantity may be expressed as
\begin{equs} 
Q^N(x, \xi^N) 
&= -\frac{1}{2} \Big( \|y\|_{\C^N}^2 - \|x\|_{\C^N}^2 \Big) -\Big( \Psi^N(y)-\Psi^N(x) \Big) \\
&\quad -\frac{1}{4 \delta} \Big\{ 
\| x - y - \delta \mu^N(y)  - \rr^N(y,\delta)\|^2_{\C^N} 
- \| y - x -\delta \mu^N(x) - \rr^N(x,\delta) \|^2_{\C^N} \Big\}. \\ \label{eqn:QN}
\end{equs}

The main observation (also used in \cite{pill:stu:12})
is that $Q^N(x,\xi^N)$ can be approximated by a Gaussian random variable
\begin{equs} \label{e.gauss.approx}
Q^N(x,\xi^N) \; \approx \; Z_{\ell}
\end{equs}
where $Z_{\ell} \dist \Normal(-\frac{\ell^3}{4}, \frac{\ell^3}{2})$. 
These approximations 
are made rigorous in Lemma \ref{lem:Gaussian_approx} and Lemma \ref{lem:concentration}. 
Therefore, the Bernoulli random variable $\gamma^{N}(x,\xi^N)$ with success probability 
$1 \wedge e^{Q^N(x,\xi^N)}$ can be approximated by a Bernoulli random variable, independent of $x$,
with success probability equal to
\begin{equs} \label{e.limit.acc.prob}
\alpha(\ell) = \EE\big[ 1 \wedge e^{Z_{\ell}} \big].
\end{equs}
Thus, the limiting acceptance probability of the MALA algorithm 
is as given in Equation \eqref{e.limit.acc.prob}.\\
Recall that $\Delta t=N^{-\frac13}.$ With this notation
we introduce the drift function $d^N:\h^s \to \h^s$ given by
\begin{equs} \label{eq:drift}
d^N(x) &= \big( h(\ell) \Delta t\big)^{-1} \EE\big[ x^{1,N} - x^{0,N} \,|x^{0,N} = x \big]
\end{equs}
and the martingale difference array $\{\Gamma^{k,N}: k \geq 0\}$ defined by $\Gamma^{k,N} = \Gamma^N(x^{k,N}, \xi^{k,N})$
with
\begin{equs} \label{eq:mart}
\Gamma^{k,N} &= \big( 2 h(\ell) \Delta t \big)^{-\frac12} \Big( x^{k+1,N} - x^{k,N} - h(\ell) \Delta t \; d^N(x^{k,N}) \Big).
\end{equs}
The normalization constant $h(\ell)$ defined in Equation \eqref{eqn:speedfun}
ensures that the drift function $d^N$ and the martingale difference array $\{\Gamma^{k,N}\}$
are asymptotically independent from the parameter $\ell$. 
The drift-martingale decomposition of the Markov chain $\{x^{k,N}\}_k$ then reads
\begin{equs} \label{eqn:drift-mart}
x^{k+1,N} - x^{k,N} = h(\ell) \Delta t d^N(x^{k,N}) 
+ \sqrtd{2 h(\ell) \Delta t} \; \Gamma^{k,N}.
\end{equs}
Lemma \ref{lem.drift.approx} and Lemma \ref{lem:diffus}
exploit the Gaussian behaviour of $Q^N(x,\xi^N)$ described in Equation \eqref{e.gauss.approx}
in order to give quantitative versions of the following approximations,
\begin{equs} \label{e.drift.noise.approx}
d^N(x) \;\approx\; \mu(x)\; 
\qquad \qquad \text{and} \qquad \qquad
\Gamma^{k,N} \;\approx\; \Normal(0,C)
\end{equs}
where $\mu(x) = -\Big( x + C \nabla \Psi(x)\Big)$.
From Equation \eqref{eqn:drift-mart} it follows that for large $N$
the evolution of the Markov chain ressembles the Euler
discretization of the limiting diffusion \eqref{eqn:spde}.
The next step consists of 
proving an invariance principle for a rescaled version 
of the martingale difference array $\{\Gamma^{k,N}\}$. 
The continuous process $W^N \in \C([0;T], \h^s)$
is defined as
\begin{equs} \label{e.WN}
W^N(t) = \sqrt{\Delta t} \; \sum_{j=0}^k \Gamma^{j,N} \;+\; \frac{t - k\Delta t}{\sqrt{\Delta t}} \; \Gamma^{k+1,N}
\qquad \text{for} \qquad
k \Delta t \leq t < (k+1) \Delta t.
\end{equs}
The sequence of processes $\{W^N\}$ converges weakly in 
$\C([0;T], \h^s)$ to a Brownian motion $W$ in $\h^s$ with covariance operator equal to $C_s$.
Indeed, Proposition \ref{lem:bweakconv} proves the stronger result
\begin{equs}
(x^{0,N},W^N) \Longrightarrow (z^0,W)
\end{equs}
where $\Longrightarrow$ denotes weak convergence in $\h^s \times \C([0;T],\h^s)$ and $z^0 \dist \pi$ is independent 
of the limiting Brownian motion $W$. Once we have the invariance principle and the converge of the drift and diffusion terms, the ``Master Theorem"  in \cite{pill:stu:12} (see Proposition 3.1 of \cite{pill:stu:12}) gives the required diffusion limit. 

\section{Proof of the Main Result}
In this section, we give the proof of the Theorem \ref{thm:main}. 
To this end, we use Proposition 3.1 of \cite{pill:stu:12}. According to Proposition 3.1 of \cite{pill:stu:12}, to show the diffusion limit, we must show the following three conditions.
\begin{enumerate}
\item 
{\bf Convergence of initial conditions:} $\pi^N$ 
converges in distribution to the probability measure $\pi$ where $\pi$ has a finite first moment, that is  
$\EE^{\pi}[ \|x\|_s ] < \infty$.
\item
{\bf Invariance principle:}
the sequence $(x^{0,N}, W^N)$ defined by Equation 
\eqref{e.WN} converges weakly in 
$\h^s \times \C([0,T],\h^s)$ to $(z^0, W)$ where $z^0 \dist \pi$ 
and $W$ is a Brownian motion in $\h^s$, independent from $z^0$, 
with covariance operator $C_s$.
\item
{\bf Convergence of the drift:}
there exists a globally Lipschitz function $\mu:\h^s \to \h^s$ 
that satisfies 
\begin{equs}
\lim_{N \to \infty} \; \EE^{\pi^N}\left[ \|d^N(x)-\mu(x)\|_s \right] = 0. 
\end{equs}
\end{enumerate}

Item (1.) above follows from Lemma 4.3 of \cite{pill:stu:12}); also see Section \ref{sec:approx} below. 
Item (2.) is proved in Proposition \ref{lem:bweakconv}.  
Item (3.) is proved in  Lemma \ref{lem.drift.approx}.
Thus we have established all three conditions required by Proposition 3.1 of \cite{pill:stu:12} and thus the proof of our main result is finished. 
\section{Key Estimates}
\label{sec:key.estimates}
In this section, we prove some key estimates for the proximal operator, and 
and also collect some key approximation properties of $\mu^N$ and $\pi^N$ from 
\cite{pill:stu:12}. These properties will be repeatedly used throughout.  
\subsection{Approximation properties of $\mu^N$ and $\pi^N$}\label{sec:approx}
\begin{itemize}
\item For $\pi_0$-almost every function $x \in \h^s$, 
the approximation $\mu^N(x) \approx \mu(x)$ holds as $N$ goes to infinity. Indeed, under Assumption \ref{ass:1}, the sequences of functions $\mu^N:\h^s \to \h^s$ satisfies (see Lemma 4.1 of \cite{pill:stu:12}),
\begin{equs} \label{eqn:muntomu}
\pi_0 \Big( \Big\{ x \in \h^s: \; \lim_{N \to \infty} \; \| \mu^N(x) - \mu(x) \|_s = 0 \; \Big\} \Big) = 1.
\end{equs}

\item Under the Assumptions \ref{ass:1} the normalization 
constants $M_{\Psi^N}$ are uniformly bounded so that for any measurable 
functional $f:\h \mapsto \mathbb{R}$, 
we have from Lemma 4.3 of \cite{pill:stu:12} that
\begin{equs}
\EE^{\pi^N} \big[ |f(x)| \big] \lesssim \EE^{\pi_0}\big[ |f(x)| \big]. 
\end{equs}
Moreover, the sequence of probability measure $\pi^N$ satisfies
$\pi^N  \;\longweak\; \pi$ where $\longweak$ denotes weak convergence in $\h^s$.
\item By Fernique's theorem \cite{Dapr:Zaby:92}, for any exponent $p \geq 0$ 
we have $$\EE^{\pi^0} \big[ \|x\|_s^p \big] < \infty.$$ We also have that for any $p \geq 0$
\begin{equs}
\sup_{N \in \mathbb{N}} \EE^{\pi^N}\big[ \|x\|_s^p \big] \;<\; \infty.
\end{equs}
\end{itemize}
\subsection{Estimates involving proximal functions and the remainder term}
Recall the constant $M_6$ from \eqref{eqn:2nd-Taylor}.
\begin{lemma} \label{lem:proxfunest}
For any $x \in \h^s$ and $N \in \mathbb{N}$ and for all $\delta < {1 \over 2M_6}$,
\be
\|\prox_{\Psi^N}^\delta(x) - x\|_s \lesssim  \delta(1+ \|x\|_s).
\ee
\end{lemma}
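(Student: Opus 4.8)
The plan is to estimate the distance $\|\prox_{\Psi^N}^\delta(x) - x\|_s$ by exploiting the variational characterization of the proximal operator together with the gradient identity \eqref{eqn:proxgradid}. Write $p = \prox_{\Psi^N}^\delta(x)$. The identity \eqref{eqn:proxgradid} applied to $\Psi^N$ with $\lambda = \delta$ reads
\begin{equs}
\frac{1}{\delta}(x - p) = \nabla \Psi^N(p),
\end{equs}
so that $x - p = \delta \, \nabla \Psi^N(p)$ and hence $\|x - p\|_s = \delta \, \|\nabla \Psi^N(p)\|_s$. The immediate difficulty is that the growth bound \eqref{eqn:psi2} controls $\|\nabla \Psi^N(p)\|_{-s}$ in the weaker $\h^{-s}$ norm, whereas we want an $\h^s$ bound; moreover the bound is in terms of $\|p\|_s$, not $\|x\|_s$, so I must first relate $p$ back to $x$.

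First I would control $\|p\|_s$ in terms of $\|x\|_s$. Since $p$ minimizes $\Psi^N(y) + \frac{1}{2\delta}\|y - x\|_s^2$, comparing the value at $y = p$ with the value at $y = x$ gives
\begin{equs}
\Psi^N(p) + \frac{1}{2\delta}\|p - x\|_s^2 \;\leq\; \Psi^N(x).
\end{equs}
Using the lower bound $M_1 \leq \Psi^N(p)$ and the upper bound $\Psi^N(x) \leq M_2(1 + \|x\|_s^2)$ from \eqref{eqn:psi1} (valid for $\Psi^N$ with $N$-independent constants by Lemma \ref{lem:regularity}), this yields $\frac{1}{2\delta}\|p - x\|_s^2 \leq M_2(1 + \|x\|_s^2) - M_1$, and in particular $\|p - x\|_s \lesssim \sqrt{\delta}\,(1 + \|x\|_s)$, so $\|p\|_s \lesssim 1 + \|x\|_s$ for $\delta$ bounded. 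This crude $\sqrt{\delta}$ estimate is not yet the claimed $\delta$ rate, but it suffices to tame $\|p\|_s$.

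Next I would upgrade to the sharp $\delta$ rate using the second-derivative bound \eqref{eqn:psi3}. The key is to introduce the fixed point $p_0 = \prox_{\Psi^N}^\delta(x)$ characterization more carefully via the second-order structure. Starting from $x - p = \delta\,\nabla \Psi^N(p)$, I write $\nabla \Psi^N(p) = \nabla \Psi^N(x) + \big(\nabla \Psi^N(p) - \nabla \Psi^N(x)\big)$, and control the difference by the global second-derivative bound \eqref{eqn:psi3}, which gives $\|\nabla \Psi^N(p) - \nabla \Psi^N(x)\|_{-s} \leq M_4 \|p - x\|_s$. This produces a self-referential inequality: taking $\h^s$-norms and absorbing the small-$\delta$ contraction (here the hypothesis $\delta < 1/(2M_6)$ enters, guaranteeing $\delta M_4 < 1/2$ so the term can be moved to the left), I obtain $\|p - x\|_s \leq \delta\,\|\nabla \Psi^N(x)\|_s + \frac{1}{2}\|p - x\|_s$, hence $\|p - x\|_s \leq 2\delta\,\|\nabla \Psi^N(x)\|_s$, and the linear growth bound \eqref{eqn:psi2} on $\|\nabla \Psi^N(x)\|_{-s}$ closes the estimate as $\lesssim \delta(1 + \|x\|_s)$.

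The main obstacle I anticipate is the interplay between the norms $\h^s$ and $\h^{-s}$: the derivative bounds are naturally stated in $\h^{-s}$, but the conclusion and the fixed-point argument require $\h^s$ control. Making the contraction argument rigorous will require either that $\C^N \nabla \Psi^N$ (rather than $\nabla \Psi^N$ alone) be the natural Lipschitz object — recall Lemma \ref{lem:regularity}(2) gives global Lipschitz continuity of $\C \nabla \Psi$ on $\h^s$ with constant $M_5$ uniform in $N$ — or a careful use of the preconditioned proximal structure. I expect the cleanest route is to phrase the whole argument in the $\h^s$ geometry from the outset, so that the Lipschitz constant $M_5$ of $\C^N \nabla \Psi^N$ supplies the contraction and $\delta < 1/(2M_6)$ controls it; the transition between the $\sqrt{\delta}$ a priori bound and the sharp $\delta$ bound is where the care is needed.
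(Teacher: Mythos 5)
Your overall route --- the first-order optimality condition for the prox plus a contraction/bootstrap --- is genuinely different from the paper's proof and can in principle be made to work, but as written it has a gap at exactly the place you flag, and the repair you sketch is not the right one. The gap: in this setting the identity \eqref{eqn:proxgradid} cannot be read literally with $\nabla\Psi^N(p)$ viewed as an element of $\h^{-s}$, since the left side $\frac{1}{\delta}(x-p)$ lies in $\h^s$; consequently the quantities $\|\nabla\Psi^N(p)\|_s$ and $\|\nabla\Psi^N(x)\|_s$ on which your contraction step runs need not be finite, and \eqref{eqn:psi2} gives no control over them. Your proposed fix --- working with $\C^N\nabla\Psi^N$ and its Lipschitz constant $M_5$ --- is the wrong object: the prox \eqref{eqn:prox} penalizes with $\|\cdot\|_s$, not with the Cameron--Martin norm, so the covariance operator $\C^N$ (diagonal entries $\lambda_j^2\asymp j^{-2\kappa}$) never enters the optimality condition; the operator that does is $B_s^{-1}$ (diagonal entries $j^{-2s}$), and these are different operators since $s<\kappa-\frac12$. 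A smaller slip: $\delta<\frac{1}{2M_6}$ does not ``guarantee $\delta M_4<\frac12$''; $M_4$ and $M_6$ are distinct constants.

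The correct repair is the one you gesture at (``phrase the whole argument in the $\h^s$ geometry''): since the penalty is $\frac{1}{2\delta}\|y-x\|_s^2$, the stationarity condition reads $\frac{1}{\delta}(x-p)=B_s^{-1}\nabla\Psi^N(p)$, i.e.\ the $\h^s$-Riesz representative of the derivative, and the isometry $\|B_s^{-1}u\|_s=\|u\|_{-s}$ converts \eqref{eqn:psi2} and \eqref{eqn:psi3} into exactly the $\h^s$ bounds you need. With this in hand, your own Step 1 (which is correct) already finishes the proof with no contraction at all: $\|x-p\|_s=\delta\|\nabla\Psi^N(p)\|_{-s}\leq \delta M_3(1+\|p\|_s)\lesssim\delta(1+\|x\|_s)$, using the a priori bound $\|p\|_s\lesssim 1+\|x\|_s$; alternatively your bootstrap works with Lipschitz constant $M_4$, provided $\delta M_4$ is bounded away from $1$. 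For contrast, the paper's proof never touches the optimality condition: it compares $\Psi^N$-values at $p$ and at $x$ exactly as in your Step 1, but instead of the crude bound via \eqref{eqn:psi1} it bounds $|\Psi^N(p)-\Psi^N(x)|$ by $|\bra{\nabla\Psi^N(x),p-x}|+M_6\|p-x\|_s^2$ using the Taylor formula \eqref{eqn:2nd-Taylor} together with convexity, then divides by $\|p-x\|_s$ and absorbs the quadratic term --- this absorption is where the hypothesis $\delta<\frac{1}{2M_6}$ is actually used --- obtaining the sharp $\delta$ rate in one stroke, with only the dual pairing between $\h^{-s}$ and $\h^s$ and no gradient identity.
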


\begin{proof}
Set $x^* = \prox_{\Psi^N}^\delta(x)$. 
Since $x^*$ minimizes the map:
\be
y \mapsto \Big( \Psi^N(y) + {1 \over 2 \delta} \|y - x\|_s^2 \Big),
\ee
from our assumptions in \eqref{eqn:2nd-Taylor} and \eqref{eqn:psi2}, it follows that
\be
{1 \over 2 \delta} \|x^* - x\|_s^2 &\leq \Psi^N(x) - \Psi^N(x^*) = |\Psi^N(x^*) - \Psi^N(x)|\\
&\leq |\langle \nabla \Psi^N(x), x^*-x \rangle| +  M_6 \|x^* - x \|_s^2 \\
&\leq M_3(1 + \|x\|_s)  \| x^*-x \|_s +   M_6 \|x^* - x \|_s^2.
\ee
Dividing by the term $\|x^* - x\|_s$ throughout and simplifying yields
\be
\|x^* - x\|_s   \leq \delta {M_3 \over (1 - 2\delta M_6)} (1 + \|x\|_s)   \lesssim \delta  (1 + \|x\|_s) 
\ee
and the proof is done.
\end{proof}

\begin{lemma}\label{lem:Remest}
Recall the remainder term $\rr^N(x,\delta)$ from \eqref{eqn:remx}.
For any $x \in \h^s$, $N \in \mathbb{N}$ and  for all $\delta < {1 \over 2M_6}$,
\be
\|\rr^N(x,\delta)\|_{\C^N} \lesssim \delta^2  (1 + \|x\|_s), \quad \quad \|\rr^N(x,\delta)\|_{s} \lesssim \delta^2  (1 + \|x\|_s).
\ee
\end{lemma}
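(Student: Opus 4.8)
The plan is to bound $\|\rr^N(x,\delta)\|_{\C^N}$ and $\|\rr^N(x,\delta)\|_s$ directly from the definition
\[
\rr^N(x,\delta) = \delta\, \C^N \Big(\prox_{\Psi^N}^{\delta}(x) - x \Big),
\]
using the estimate on $\|\prox_{\Psi^N}^\delta(x) - x\|_s$ already proved in Lemma \ref{lem:proxfunest}. The factor of $\delta$ in front is explicit; the content of the lemma is that $\|\prox_{\Psi^N}^\delta(x) - x\|_s \lesssim \delta(1+\|x\|_s)$ contributes the second power of $\delta$, and that the operator $\C^N$ is bounded uniformly in $N$ in the relevant norms. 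So the whole argument reduces to controlling how $\C^N$ acts, with the vector $v \eqdef \prox_{\Psi^N}^\delta(x) - x \in \h^s$ already known to be small.

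For the $\|\cdot\|_{\C^N}$ bound I would first observe that, by the definition $\|u\|^2_{\C^N} = \bra{u,(\C^N)^{-1}u}$ together with $\bra{u,v}_s = \bra{B_s^{1/2}u, B_s^{1/2}v}$ and $\C_s = B_s^{1/2}\C B_s^{1/2}$, one can rewrite $\|\C^N v\|_{\C^N}$ in terms of the $s$-norm of $v$ and a bounded operator. Concretely, since $\C^N$ is self-adjoint and trace-class, and $\C_s$ is trace class in $\h^s$ by the assumption $\tr_{\h^s}(\C_s) < \infty$, there is a constant $K$ independent of $N$ with $\|\C^N v\|_{\C^N} \le K \|v\|_s$; the key point is that the singular values of the relevant composite operator are bounded by $\sup_j \lambda_j^2 \cdot j^{2s}$-type quantities that are finite and $N$-independent under Assumption \ref{ass:1}. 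Combining this with $\|v\|_s \lesssim \delta(1+\|x\|_s)$ from Lemma \ref{lem:proxfunest} and the explicit prefactor $\delta$ gives
\[
\|\rr^N(x,\delta)\|_{\C^N} = \delta\,\|\C^N v\|_{\C^N} \lesssim \delta \cdot \delta(1+\|x\|_s) = \delta^2(1+\|x\|_s).
\]

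For the $\|\cdot\|_s$ bound the same strategy applies, now using that $\C^N = P^N \C P^N$ is bounded as an operator from $\h^s$ to $\h^s$ with norm uniform in $N$ (its eigenvalues in the $s$-norm are of order $\lambda_j^2 j^{2s}\cdot j^{-2s}$, bounded under \eqref{eq.eigenvalue.decay}), so that $\|\C^N v\|_s \lesssim \|v\|_s$, and then invoking Lemma \ref{lem:proxfunest} and the prefactor $\delta$ exactly as above.

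The only step requiring care — and the one I would treat as the main obstacle — is verifying that the operator norms of $\C^N$ in both the $\C^N$-weighted sense and the $s$-sense are bounded \emph{uniformly in} $N$, rather than on a fixed finite-dimensional space where boundedness is automatic but the constant could grow with $N$. This is exactly the kind of estimate controlled by the trace-class conditions $\tr_{\h^s}(\C_s)<\infty$ and the eigenvalue decay $\lambda_j \asymp j^{-\kappa}$ with $\kappa > 1/2$ and $s < \kappa - 1/2$; I would make the dependence on these quantities explicit so that the implied constant in $\lesssim$ is manifestly $N$-free. Everything else is a routine combination of the prefactor $\delta$, Lemma \ref{lem:proxfunest}, and these uniform operator bounds.
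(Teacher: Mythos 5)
Your proposal is correct and takes essentially the same route as the paper: both arguments reduce the lemma to Lemma \ref{lem:proxfunest} together with uniform-in-$N$ operator bounds on $\C^N$. The paper simply makes your asserted bound $\|\C^N v\|_{\C^N} \lesssim \|v\|_s$ explicit by noting that $\|\C^N v\|^2_{\C^N} = \langle \C^N v, v \rangle$ (the factor $(\C^N)^{-1}$ cancels one copy of $\C^N$), after which the uniform constant follows at once from the eigenvalue decay, with no need to invoke trace-class properties.
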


\begin{proof}
Set $x^* = \prox_{\Psi^N}^\delta(x)$. 
Then $\rr^N(x,\delta) = \delta\, \C^N (x^* -  x )$.
Thus 
\be
\|\rr^N(x,\delta)\|^2_{\C^N} &= \langle \rr^N(x,\delta), (\C^{N})^{-1} \rr^N(x,\delta) \rangle \\
&= \delta^2 \langle \C^N (x^* -  x), (x^* -  x ) \rangle \\
&\lesssim \delta^2 \|x^* - x\|_s^2 \lesssim \delta^4  (1 + \|x\|^2_s)
\ee
where the last inequality follows from Lemma \ref{lem:proxfunest} showing the first inequality. The second inequality follows similarly:
\be
\|\rr^N(x,\delta)\|^2_{s} = \delta^2 \| \C^N (x^* -  x )\|^2_s \lesssim \delta^2 \|x^* - x\|_s^2 \lesssim \delta^4  (1 + \|x\|^2_s)
\ee
 and the proof is done.
\end{proof}
\noindent
Next lemma shows that the size of the jump $y-x$ is of order $\sqrt{\Delta t}$.
\begin{lemma} \label{lem:size:prop}
Consider $y$ given by \eqref{eqn:proposal}.
Under Assumptions \ref{ass:1}, for any $p\geq 1$ 
we have 
\begin{equs}
\EE_x^{\pi^N} \big[ \|y-x\|^p_s \big] \lesssim (\Delta t)^{\frac{p}{2}} \cdot (1 + \|x\|^p_s). 
\end{equs}
\end{lemma}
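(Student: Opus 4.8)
The plan is to split the jump $y-x$ according to the proposal \eqref{eqn:proposal} into its three constituent pieces,
$$ y - x \;=\; \underbrace{\delta\, \mu^N(x)}_{\text{drift}} \;+\; \underbrace{\rr^N(x,\delta)}_{\text{remainder}} \;+\; \underbrace{\sqrtd{2\delta}\,(\C^N)^{\frac12}\xi^N}_{\text{noise}}, $$
and to estimate the $\h^s$-norm of each piece separately. Since $(a+b+c)^p \lesssim a^p + b^p + c^p$ for $p \geq 1$, it suffices to bound the $p$-th moment (over the Gaussian $\xi^N$, with $x$ held fixed) of each of the three terms by $(\Delta t)^{p/2}\,(1+\|x\|_s^p)$. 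Throughout I use that $\delta = \ell\,\Delta t$ with $\Delta t = N^{-1/3} \leq 1$ for all large $N$, and the elementary inequality $(1+\|x\|_s)^p \lesssim 1 + \|x\|_s^p$.

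The drift and remainder terms are deterministic given $x$ and are of strictly higher order in $\Delta t$, so they are the easy part. For the drift, the growth bound $\|\mu^N(x)\|_s \lesssim 1 + \|x\|_s$ uniform in $N$ — which follows from $\|P^N x\|_s \le \|x\|_s$ together with the uniform linear growth of $\C^N\nabla\Psi^N$ recorded in Lemma \ref{lem:regularity} and Remark \ref{rem:ts} — gives $\delta^p\|\mu^N(x)\|_s^p \lesssim (\Delta t)^p\,(1+\|x\|_s^p) \lesssim (\Delta t)^{p/2}\,(1+\|x\|_s^p)$, using $\Delta t \le 1$. For the remainder, Lemma \ref{lem:Remest} yields $\|\rr^N(x,\delta)\|_s \lesssim \delta^2(1+\|x\|_s)$, which after taking the $p$-th power is of order $(\Delta t)^{2p}$ and hence again dominated by $(\Delta t)^{p/2}$.

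The noise term is where the sharp order $(\Delta t)^{p/2}$ originates, and controlling it is the main point. Here $(\C^N)^{\frac12}\xi^N \dist \Normal(0,\C^N)$, so $\|(\C^N)^{\frac12}\xi^N\|_s^2 = \sum_{j=1}^N j^{2s}\lambda_j^2\, \xi_j^2$ with $\xi_j$ i.i.d.\ standard Gaussian. The key estimate is that $\EE_x\big[\|(\C^N)^{\frac12}\xi^N\|_s^p\big] \lesssim 1$ uniformly in $N$: its second moment is $\sum_{j=1}^N j^{2s}\lambda_j^2 = \tr_{\h^s}(\C^N_s) \le \tr_{\h^s}(\C_s) < \infty$ by the trace-class bound \eqref{eq.finite.trace} (valid since $s < \kappa - \tfrac12$), and all higher moments are finite and uniformly bounded by Fernique's theorem applied to $\Normal(0,\C^N)$ on $\h^s$ (equivalently $\sup_N \EE^{\pi_0^N}[\|x\|_s^p] < \infty$, as noted in Section \ref{sec:approx}). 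Multiplying by the prefactor $(2\delta)^{p/2} \asymp (\Delta t)^{p/2}$ gives a contribution of exactly order $(\Delta t)^{p/2}$. Combining the three bounds then finishes the proof, the Brownian noise setting the leading order while the drift and remainder are of lower order. The only genuinely nontrivial ingredient is the uniform-in-$N$ Gaussian moment bound, which rests entirely on the trace-class condition $\tr_{\h^s}(\C_s) < \infty$ guaranteed by $s < \kappa - \tfrac12$.
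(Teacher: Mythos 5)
Your proof is correct and follows essentially the same route as the paper's: decompose $y-x$ into drift, proximal remainder, and Gaussian noise, bound the drift via the uniform linear growth of $\mu^N$, the remainder via Lemma \ref{lem:Remest}, and the noise via uniform-in-$N$ Gaussian moment bounds (Fernique/trace-class), with the noise term setting the leading order $(\Delta t)^{p/2}$. The only cosmetic difference is that you take $p$-th powers of each piece separately, while the paper first bounds $\|y-x\|_s$ by the sum and then takes moments; the substance is identical.
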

\begin{proof}
Under Assumption \ref{ass:1} the function $\mu^N$ is globally Lipschitz 
on $\h^s$, with Lipschitz constant that can be chosen independent from $N$. 
Thus using Lemma \ref{lem:Remest} we obtain that 
\begin{equs}
\|y-x\|_s &\lesssim \Delta t (1+\|x\|_s) +  \|\rr^N(x,\delta)\|_s  + \sqrtd{\Delta t}\, \| \C^{\frac{1}{2}}\xi^N \|_s \\
&\lesssim \Delta t (1+\|x\|_s) + (\Delta t)^2  (1 + \|x\|_s) \,  + \sqrtd{\Delta t} \,\| \C^{\frac{1}{2}}\xi^N \|_s \\
&\lesssim \Delta t (1+\|x\|_s) + \sqrtd{\Delta t} \,\| \C^{\frac{1}{2}}\xi^N \|_s.
\end{equs}
We have $\EE^{\pi^0} \Big[ \|\C^{\frac{1}{2}} \xi^N\|^p_s \Big] \leq \EE^{\pi^0} \Big[ \|\zeta\|^p_s \Big] < \infty$,
where $\zeta \dist \Normal(0,\C)$. 
Consequently, $\EE^{\pi^0} \Big[ \|\C^{\frac{1}{2}} \xi^N\|^p_s \Big]$ is uniformly bounded as a function of $N$, 
proving the lemma.
\end{proof}

Consider $y$ given by \eqref{eqn:proposal} and recall from \eqref{eqn:pmala-mala} that 
\be
y = \xmala + \rr^N(x,\delta).
\ee

\begin{lemma}\label{lem:appryx}
We have
\be
a^N(x,\delta) &\equiv \|y\|^2_{\C^N}  - \|\xmala\|^2_{\C^N}   \\
\EE^{\pi^N} a^N(x,\delta)  &\lesssim \delta^2
\ee
\end{lemma}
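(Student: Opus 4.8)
The plan is to exploit the decomposition $y=\xmala+\rr^N(x,\delta)$ from \eqref{eqn:pmala-mala}. Writing $\bra{\cdot,\cdot}_{\C^N}$ for the inner product attached to $\|\cdot\|_{\C^N}$ on $X^N$, I would first expand
\begin{equs}
a^N(x,\delta)=\|\xmala+\rr^N\|^2_{\C^N}-\|\xmala\|^2_{\C^N}=2\,\bra{\xmala,\rr^N}_{\C^N}+\|\rr^N\|^2_{\C^N}.
\end{equs}
The last term is handled immediately by Lemma \ref{lem:Remest}, which gives $\|\rr^N\|^2_{\C^N}\lesssim\delta^4(1+\|x\|_s)^2$, already an order better than required. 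So the whole difficulty sits in the cross term.

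The key step, and the one I expect to be the main obstacle, is to \emph{avoid} estimating the cross term by a Cauchy--Schwarz in the $\|\cdot\|_{\C^N}$-norm: such a bound would produce the factor $\|\xmala\|_{\C^N}$, whose square has expectation of order $N$ under $\pi^N$, and the resulting estimate would be far too weak (of order $N^{-1/6}$ rather than $\delta^2 = N^{-2/3}$). Instead I would use the explicit form of the remainder. Setting $w\equiv\prox_{\Psi^N}^{\delta}(x)-x$, equation \eqref{eqn:remx} reads $\rr^N(x,\delta)=\delta\,\C^N w$, so that $(\C^N)^{-1}\rr^N=\delta\,w$ and the operator $(\C^N)^{-1}$ appearing in $\bra{u,v}_{\C^N}=\bra{u,(\C^N)^{-1}v}$ cancels exactly:
\begin{equs}
\bra{\xmala,\rr^N}_{\C^N}=\delta\,\bra{\xmala,w}.
\end{equs}
This reduces the estimate to the dimension-free $\h^s$ bounds already in hand.

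Next I would take the expectation $\EE_x$ over the proposal noise $\xi^N$. Since $w$ depends on $x$ only and $\EE_x[\xmala]=x+\delta\mu^N(x)$ by \eqref{eqn:xmala}, the Gaussian contribution drops out and
\begin{equs}
\EE_x\big[a^N(x,\delta)\big]=2\delta\,\bra{x+\delta\mu^N(x),\,w}+\|\rr^N\|^2_{\C^N}.
\end{equs}
Using $|\bra{u,v}|\le\|u\|_s\|v\|_s$ for $u,v\in\h^s$, together with $\|w\|_s\lesssim\delta(1+\|x\|_s)$ (Lemma \ref{lem:proxfunest}) and the uniform-in-$N$ bound $\|\mu^N(x)\|_s\lesssim 1+\|x\|_s$ (Remark \ref{rem:ts}), the cross term is controlled by $\delta^2(1+\|x\|_s)^2$.

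Finally I would integrate over $x\sim\pi^N$ and invoke the uniform moment bound $\sup_N\EE^{\pi^N}[(1+\|x\|_s)^2]<\infty$ recorded in Section \ref{sec:approx}, which yields $\EE^{\pi^N}[a^N]\lesssim\delta^2+\delta^4\lesssim\delta^2$, as claimed. The only genuinely delicate point is the structural cancellation above; everything else is a routine combination of the proximal estimates of Lemmas \ref{lem:proxfunest}--\ref{lem:Remest} with the stationary moment bounds.
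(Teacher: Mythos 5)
Your proof is correct and takes essentially the same route as the paper: the identical expansion $a^N(x,\delta) = 2\bra{\xmala,\rr^N(x,\delta)}_{\C^N}+\|\rr^N(x,\delta)\|^2_{\C^N}$, with the squared term killed by Lemma \ref{lem:Remest} and the cross term reduced, via the cancellation $(\C^N)^{-1}\rr^N(x,\delta)=\delta\big(\prox_{\Psi^N}^{\delta}(x)-x\big)$, to the dimension-free estimates of Lemma \ref{lem:proxfunest} combined with the uniform stationary moment bounds. The only difference is cosmetic: where you take $\EE_x$ first so that the Gaussian part of $\xmala$ drops out by mean zero, the paper keeps the noise inside, bounds $\|\xmala\|_s$ via Lemma \ref{lem:size:prop}, and controls the resulting moment $\EE\big[\|\C^{\frac12}\xi^N\|_s\big]$ -- both yield the same $\OO(\delta^2)$ conclusion.
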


\begin{proof}
From \eqref{eqn:pmala-mala} we have
\be
\|y\|^2_{\C^N} - \|\xmala\|^2_{\C^N}  &= a^N(x,\delta) \\
a^N(x,\delta) &\equiv 2 \langle \xmala, \rr^N(x) \rangle_{\C^N} + \|\rr^N(x,\delta)\|^2_{\C^N}. \label{eqn:anrem}
\ee
From \eqref{eqn:remx}, we obtain
\be
|\langle\xmala, \rr^N(x,\delta) \rangle_{\C^N}| &=  |\langle \xmala, (\C^N)^{-1} \rr^N(x,\delta) \rangle| \\
&\leq \|\xmala\|_s \|\rr^N(x,\delta)\|_{\C^N} .
\ee
From Lemma \ref{lem:size:prop} we deduce that 
\be
\|\xmala\|_s \lesssim (1+ \delta) (1+\|x\|_s) + \sqrtd{\delta} \| \C^{\frac{1}{2}}\xi^N \|_s \,.
\ee
Combining this with Lemma \ref{lem:Remest} yields that 
\be
|\langle\xmala, \rr^N(x,\delta) \rangle_{\C^N}|  \lesssim \delta^2  (1 + \|x\|^2_s)(1 + \sqrt{\delta} \| \C^{\frac{1}{2}}\xi^N \|_s)\,.
\ee 
Thus
\be
\EE^{\pi^N}(|\langle\xmala, \rr^N(x,\delta) \rangle_{\C^N}|) \lesssim \delta^2 \,\EE^{\pi^N}(1 + \|x\|^2_s)(1 + \sqrt{\delta} \| \C^{\frac{1}{2}}\xi^N \|_s) \lesssim \delta^2. \label{eqn:anfinrem}
\ee
Thus from \eqref{eqn:anrem}, \eqref{eqn:anfinrem} and Lemma \ref{lem:Remest} we deduce that
\be
\EE^{\pi^N}(a^N(x,\delta) \lesssim \delta^2
\ee
and the proof is finished.
\end{proof}
%
%

\subsection{Gaussian approximation of $Q^N$}
\label{ssec:ga}

Recall the quantity $Q^N$ defined in Equation \eqref{eqn:QN}. This section proves that $Q^N$
has a Gaussian behavior in the sense that
\begin{equs} \label{e.gauss.app.error}
Q^N(x, \xi^N) = Z^N(x, \xi^N) \;+\; i^N(x,\xi^N) \;+\; \err^N(x,\xi^N)
\end{equs}
where the quantities $Z^N$ and $i^N$ are equal to
\begin{equs}
Z^N(x, \xi^N)&= -\frac{\ell^3}{4} - \frac{\ell^{\frac{3}{2}}}{\sqrt{2}} N^{-\frac{1}{2}} \sum_{j=1}^N \lambda_j^{-1} \xi_j x_j \label{e.Zn}\\
i^N(x,\xi^N) &= \frac{1}{2} (\ell \Delta t)^2 \Big( \|x\|^2_{\C^N} - \|(\C^N)^{\frac{1}{2}}\xi^N\|^2_{\C^N} \Big) \label{e.iN}
\end{equs}
with $i^N$ and $e^N$ small.
Thus the principal contributions to $Q^N$ 
comes from the random variable $Z^N(x,\xi^N)$.
Notice that, for each fixed $x \in \h^s$, the 
random variable $Z^N(x,\xi^N)$ is Gaussian.
Furthermore, the Karhunen-Lo\`eve expansion of 
$\pi_0$ shows that for $\pi_0$-almost every choice of 
function $x \in \h$ the sequence 
$\big\{ Z^N(x, \xi^N) \big\}_{N \geq 1}$ converges 
in law to the distribution of 
$Z_{\ell} \dist \Normal(-\frac{\ell^3}{4}, \frac{\ell^3}{2})$. 
The next lemma rigorously bounds the error terms 
$\err^N(x,\xi^N)$ and $i^N(x,\xi^N)$: we show that $i^N$ is an error term of order $\OO(N^{-\frac16})$ 
and $\err^N(x,\xi)$ is an error term of order $\OO(N^{-\frac13})$.
In Lemma \ref{lem:concentration} we then quantify the convergence
of $Z^N(x,\xi^N)$ to $Z_{\ell}.$

\begin{lemma} \label{lem:Gaussian_approx} 
{\bf (Gaussian Approximation)}
Let $p \geq 1$ be an integer.
Under Assumptions \ref{ass:1}, $Q^N(x, \xi^N)$ has the expansion given in 
\eqref{e.gauss.app.error} and the error terms
$i^N$ and $\err^N$ in the Gaussian approximation \eqref{e.gauss.app.error} satisfy
\begin{equation} \label{eqn:error_control}
\Big(\EE^{\pi^N} \big[|i^N(x,\xi^N)|^p \big]\Big)^{\frac{1}{p}} = \OO(N^{-\frac{1}{6}}) 
\qquad \text{and} \qquad
\Big(\EE^{\pi^N} \big[ |\err^N(x,\xi^N)|^p \big] \Big)^{\frac{1}{p}} = \OO(N^{-\frac{1}{3}}). 
\end{equation}
\end{lemma}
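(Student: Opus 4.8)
The plan is to exploit the coupling $\xprox = \xmala + \rr^N(x,\delta)$ from \eqref{eqn:pmala-mala} and to reduce the claim to the analogous Gaussian approximation for MALA proved in \cite{pill:stu:12}. Let $Q^N_{\mathrm{MALA}}(x,\xi^N)$ denote the MALA log-acceptance ratio, i.e.\ the right-hand side of \eqref{eqn:QN} with $y$ replaced by $\xmala$ and both remainder terms $\rr^N(\cdot,\delta)$ deleted, driven by the same Gaussian vector $\xi^N$. By \cite{pill:stu:12}, $Q^N_{\mathrm{MALA}} = Z^N + i^N + \err^N_{\mathrm{MALA}}$ with $Z^N$, $i^N$ exactly as in \eqref{e.Zn}--\eqref{e.iN} and with $\|i^N\|_{L^p(\pi^N)} = \OO(N^{-1/6})$, $\|\err^N_{\mathrm{MALA}}\|_{L^p(\pi^N)} = \OO(N^{-1/3})$. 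It therefore suffices to show that $R^N \eqdef Q^N - Q^N_{\mathrm{MALA}}$ satisfies $\|R^N\|_{L^p(\pi^N)} = \OO(N^{-1/3})$; then $\err^N \eqdef \err^N_{\mathrm{MALA}} + R^N$ furnishes the decomposition \eqref{e.gauss.app.error} with the required orders.

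I would bound the contribution of $R^N$ coming from each of the three groups in \eqref{eqn:QN}. The quadratic form contributes $-\tfrac12(\|y\|^2_{\C^N} - \|\xmala\|^2_{\C^N}) = -\tfrac12 a^N(x,\delta)$; the pointwise estimate in the proof of Lemma \ref{lem:appryx} is polynomial in $\|x\|_s$ and $\|\C^{1/2}\xi^N\|_s$, so with the uniform moment bounds of Section \ref{sec:approx} it upgrades to $\|a^N\|_{L^p(\pi^N)} \lesssim \delta^2 = \OO(N^{-2/3})$. The potential contributes $-(\Psi^N(y) - \Psi^N(\xmala))$; since $y - \xmala = \rr^N(x,\delta)$, a first-order Taylor bound using \eqref{eqn:psi2} together with Lemma \ref{lem:Remest} makes this $\OO(\delta^2) = \OO(N^{-2/3})$ in $L^p$ as well.

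The delicate group is the proposal-density term carrying the prefactor $\tfrac{1}{4\delta} = \OO(N^{1/3})$. The crucial simplification is that the two \emph{forward} residuals coincide: for proximal MALA at $y = \xprox$ and for MALA at $\xmala$ the forward residual equals the noise $\sqrtd{2\delta}\,(\C^N)^{1/2}\xi^N$ in both cases, so their squared $\C^N$-norms cancel identically. Only the backward residuals differ, and with $B = x - y - \delta\mu^N(y) - \rr^N(y,\delta)$ and $B^{\mathrm{MALA}} = x - \xmala - \delta\mu^N(\xmala)$ one has $B - B^{\mathrm{MALA}} = -\rr^N(x,\delta) - \delta(\mu^N(y)-\mu^N(\xmala)) - \rr^N(y,\delta)$. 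Here it would be fatal to apply Cauchy--Schwarz directly in the Cameron--Martin norm, because the Gaussian part of $B^{\mathrm{MALA}}$ has $\|(\C^N)^{1/2}\xi^N\|_{\C^N} = \|\xi^N\| \asymp \sqrtd{N}$. Instead I would use the explicit form $\rr^N(x,\delta) = \delta\,\C^N(\prox_{\Psi^N}^\delta(x) - x)$ to cancel the factor $(\C^N)^{-1}$ hidden in $\langle\cdot,\cdot\rangle_{\C^N}$:
\begin{equs}
\langle B^{\mathrm{MALA}}, \rr^N(x,\delta)\rangle_{\C^N} = \delta\,\langle B^{\mathrm{MALA}}, \prox_{\Psi^N}^\delta(x) - x\rangle,
\end{equs}
which is now an ordinary $\h$-pairing. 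Since $\C$ is trace class, the Gaussian part of $B^{\mathrm{MALA}}$ has $\OO(1)$ $\h$-norm and its drift part is $\OO(\delta)$, while $\|\prox_{\Psi^N}^\delta(x) - x\| \lesssim \delta(1+\|x\|_s)$ by Lemma \ref{lem:proxfunest}; hence this cross term is $\OO(\delta^{5/2})$ in $L^p$. The contribution of $\rr^N(y,\delta)$ is handled identically, and the Lipschitz $\mu^N$-increment, split via $\mu^N = -(P^N\cdot + \C^N\nabla\Psi^N(\cdot))$ and treated by the same device together with \eqref{eqn:psi3}, gives terms of strictly smaller order. Thus $|\,\|B\|^2_{\C^N} - \|B^{\mathrm{MALA}}\|^2_{\C^N}\,| = \OO(\delta^{5/2})$ in $L^p$, and after the prefactor this group contributes $\OO(\delta^{3/2}) = \OO(N^{-1/2})$.

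Collecting the three bounds gives $\|R^N\|_{L^p(\pi^N)} = \OO(N^{-1/2}) = o(N^{-1/3})$, which is absorbed into $\err^N_{\mathrm{MALA}}$ without altering its order, completing the argument. I expect the main obstacle to be precisely this last group: the prefactor $\tfrac{1}{4\delta}$ amplifies errors by $N^{1/3}$, and the estimate only closes because (i) the forward residuals cancel exactly, so the surviving discrepancy is the $\OO(\delta^2)$ backward remainder, and (ii) the factor $\C^N$ built into $\rr^N$ lets one trade the prohibitively large Cameron--Martin norm of the Gaussian noise for its $\OO(1)$ norm in $\h$ (equivalently, the trace-class smoothing of $\C$), which is what converts the naive $\OO(1)$ bound into the genuine gain $N^{-1/2}$.
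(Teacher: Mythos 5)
Your proof is correct, and it is organized differently from the paper's. The paper expands $Q^N$ directly into four groups $I_1+I_2+I_3+I_4$ following the template of Lemma 4.4 of \cite{pill:stu:12}: the pure Gaussian part $I_1$ is simplified algebraically to $-\frac{\ell \Delta t}{4}(\|y\|^2_{\C^N}-\|x\|^2_{\C^N})$, then related to the MALA quadratic form via Lemma \ref{lem:appryx}; the $\Psi^N$-terms $I_2$ are handled by re-running the \cite{pill:stu:12} calculation with the proximal proposal (using Lemma \ref{lem:size:prop} for the jump size); and the genuinely new terms $I_3$, $I_4$ carrying $\rr^N$ are bounded separately. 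You instead couple the two proposals through the same noise, write $Q^N = Q^N_{\mathrm{MALA}} + R^N$, invoke \cite{pill:stu:12} strictly as a black box for $Q^N_{\mathrm{MALA}}$, and bound $R^N$; in particular you never re-expand the $\Psi^N$-terms against the \cite{pill:stu:12} structure, but compare $\Psi^N(y)$ with $\Psi^N(\xmala)$ directly by a Taylor bound of order $\delta^2$. The two arguments meet at the crux: the $\tfrac{1}{4\delta}$-prefactored group, where naive Cauchy--Schwarz in the $\|\cdot\|_{\C^N}$ norm fails because $\|(\C^N)^{1/2}\xi^N\|_{\C^N} \asymp \sqrt{N}$, and where both you and the paper (in its bound on $I_4$, via $\|(\C^N)^{-1}\rr^N(\cdot,\delta)\|_s$) use the explicit factor $\C^N$ in $\rr^N = \delta\,\C^N(\prox_{\Psi^N}^{\delta}(\cdot)-\cdot)$ to trade the Cameron--Martin pairing for an ordinary pairing, together with Lemma \ref{lem:proxfunest}. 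What your route buys is modularity: the MALA result enters only through its statement, so the proof makes transparent that the whole lemma rests on $\|R^N\|_{L^p} = o(N^{-1/3})$, and your observation that the forward residuals cancel exactly is cleaner than the paper's term-by-term bookkeeping; what the paper's route buys is that it produces the explicit terms $Z^N$, $i^N$ in situ, in the same form needed later (e.g.\ for the decomposition \eqref{e.QQi} in Lemma \ref{lem.drift.approx}). Two cosmetic points: your prose says the Gaussian part of $B^{\mathrm{MALA}}$ has $\OO(1)$ norm in $\h$, while the $\OO(\delta^{5/2})$ bound you state uses the scaled noise $\sqrtd{2\delta}\,(\C^N)^{1/2}\xi^N$, of $\h$-norm $\OO(\sqrt{\delta})$ --- either reading closes the argument, but the constants should be stated consistently; and your $L^p$ upgrade of Lemma \ref{lem:appryx} from the paper's stated $L^1$ bound (via the pointwise polynomial estimate and Fernique) is a step the paper itself glosses over, so spelling it out is a genuine improvement rather than a detour.
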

\begin{proof}
As in Lemma 4.4 of \cite{pill:stu:12},  without loss of generality, we suppose $p=2q$.
The quantity $Q^N$ is defined in Equation \eqref{eqn:QN} and
expanding terms leads to
\begin{equs}
Q^N(x,\xi^N) \;=\; I_1 \;+\; I_2 \;+\; I_3  + \; I_4
\end{equs}
where the quantities $I_1$, $I_2$, $I_3$ and $I_4$ are given by
\begin{equs}
I_1 &= -\frac{1}{2} \big( \|y\|_{\C^N}^2 - \|x\|_{\C^N}^2  \big)
- \frac{1}{4 \ell \Delta t} \big( \|x-y(1-\ell \Delta t)\|_{\C^N}^2 - \|y-x(1-\ell \Delta t)\|_{\C^N}^2 \big)  \\
I_2 &= -\Big(\Psi^N(y) - \Psi^N(x) \Big)
- \frac{1}{2} \Big( \bra{x-y(1-\ell \Delta t), \C^N \nabla \Psi^N(y)}_{\C^N} 
- \bra{y-x(1-\ell \Delta t), \C^N \nabla \Psi^N(x)}_{\C^N}\Big) \\
I_3 &= -\frac{1}{4\ell \Delta t}\Big\{ \| \ell \Delta t \, \C^N \nabla \Psi^N(y) + \rr^N(y,\delta) \|_{\C^N}^2 - \| \ell \Delta t \,\C^N \nabla \Psi^N(x) + \rr^N(x,\delta) \|_{\C^N}^2\Big\} \\
I_4&= -\frac{1}{2\ell \Delta t}\Big\{ \langle x - y(1- \ell \Delta t), \rr^N(y,\delta)\rangle_{\C^N} - 
\langle y - x(1- \ell \Delta t), \rr^N(x,\delta)\rangle_{\C^N} \Big\}.
\end{equs}

The term $I_1$ arises purely from the Gaussian part of the 
target measure $\pi^N$ and from the Gaussian part of
the proposal. The other terms come from the change of probability involving the functional $\Psi^N$.
By the calculation identical to page 2343 of \cite{pill:stu:12},  we can simplify the the term $I_1$ to be:
\begin{equs}
I_1 
&= -\frac{\ell \Delta t}{4} \Big( \|y\|_{\C^N}^2 - \|x\|_{\C^N}^2\Big). \label{eqn:I1main}
\end{equs}
The term $I_1$ is shown to be $\OO(1)$ and constitutes the main contribution to $Q^N$. Before analyzing $I_1$ 
in more detail, we show that $I_2$, $I_3$ and $I_4$ are $\OO(N^{-\frac{1}{3}})$:
\begin{equs} \label{e.I2.3.small}
\Big(\EE^{\pi^N}[I_2^{2q}]\Big)^{\frac{1}{2q}} + \Big(\EE^{\pi^N}[I_3^{2q}]\Big)^{\frac{1}{2q}} 
+ \Big(\EE^{\pi^N}[I_4^{2q}]\Big)^{\frac{1}{2q}} = \OO(N^{-\frac13}).
\end{equs}
\begin{itemize}
\item
By a calculation nearly identical to the one in Lemma 4.4 of \cite{pill:stu:12} (the only change being the use of our Lemma \ref{lem:size:prop} instead of their Lemma 4.2) we obtain that 
\begin{equs} \label{eqn:i2t}
\Big( \EE^{\pi^N}[I_2^{2q}] \Big)^{\frac{1}{2q}} \;=\;  \OO(N^{-\frac13}).
\end{equs}

\item
Using the definition of $\rr^N(x,\delta)$ from \eqref{eqn:remx}, we obtain that 
\be
\EE^{\pi^N}\big[ I_3^{2q} \big] &\lesssim \Delta t^{2q} \; \EE^{\pi^N}\Big[ |\bra{\nabla \Psi^N(x), \C^N \nabla \Psi^N(x)}|^q + |\bra{\nabla \Psi^N(y), \C^N \nabla \Psi^N(y)}|^q \Big]  \\
& \quad \quad +  \Delta t^{-2q} \; \EE^{\pi^N}\Big[ \| \rr^N(x,\delta) \|^{2q}_{\C^N} +  \| \rr^N(y,\delta) \|^{2q}_{\C^N} \Big].
\ee
Lemma \ref{lem:regularity} states $\C^N \nabla \Psi^N:\h^s \to \h^s$ is globally Lipschitz, 
with a Lipschitz constant that can be chosen uniformly in $N$. Therefore,
\begin{equs} \label{e.CPsi.bound}
\| \C^N \nabla \Psi^N(z) \|_s \lesssim 1 + \| z \|_s.
\end{equs}
Since $\|\C^N \nabla \Psi^N(z)\|_{\C^N}^2 = \bra{\nabla \Psi^N(z), \C^N \nabla \Psi^N(z)}$, the bound \eqref{eqn:psi2} gives
\begin{equs}
\EE^{\pi^N}\big[ I_3^{2q} \big]
&\lesssim \Delta t^{2q} \; \EE\Big[ \bra{\nabla \Psi^N(x), \C^N \nabla \Psi^N(x)}^q + \bra{\nabla \Psi^N(y), \C^N \nabla \Psi^N(y)}^q \Big] \\
&\lesssim  \Delta t^{2q} \; \EE^{\pi^N}\Big[ (1+\|x\|_s)^{2q} + (1+\|y\|_s)^{2q} \Big] \\
&\lesssim  \Delta t^{2q} \; \EE^{\pi^N}\Big[ 1+\|x\|^{2q}_s + \|y\|^{2q}_s \Big] \;\lesssim\;  \Delta t^{2q}
\;=\; \Big(N^{-\frac13}\Big)^{2q}. \label {eqn:i3term1}
\end{equs}
Similarly, from Lemma \ref{lem:Remest} and \ref{lem:size:prop},
\be
\Delta t^{-2q} \; \EE^{\pi^N}\Big[ \| \rr^N(x,\delta) \|^{2q}_{\C^N} +  \| \rr^N(y,\delta) \|^{2q}_{\C^N} \Big]
&\lesssim \Delta t^{6q} \,\EE^{\pi^N}\Big[ 1+\|x\|^{2q}_s + \|y\|^{2q}_s \Big] \;\lesssim\;  \Delta t^{6q} \\
&\;\lesssim\; \Big(N^{-\frac13}\Big)^{6q} \lesssim  \Big(N^{-\frac13}\Big)^{2q}. \label{eqn:i3term2}
\ee
Thus from \eqref{eqn:i3term1} and \eqref{eqn:i3term2}, we conclude that 
\be
\Big(\EE^{\pi^N}[I_3^{2q}]\Big)^{\frac{1}{2q}} \lesssim \Big(N^{-\frac13}\Big)^{2q}. \label{eqn:i3t}
\ee
\item 
Finally, we tackle the term $I_4$:
\be
\EE^{\pi^N}\big[ I_4^{2q} \big]  &\lesssim \Delta t^{-2q} \; \EE\Big[ \|x - y(1- \ell \Delta t)\|^{2q}_s \,\| (\C^N)^{-1} \rr^N(y,\delta)\|^{2q}_s  \\
& \quad \quad \quad \quad \quad  +  \|y - x(1- \ell \Delta t)\|^{2q}_s \,\| (\C^N)^{-1} \rr^N(x,\delta)\|^{2q}_s\Big].
\ee
From Lemma \ref{lem:size:prop}, we obtain that $\EE^{\pi^N}( \|y - x(1- \ell \Delta t)\|^{4q}_s) \lesssim (\Delta t)^{2q} \cdot (1 + \|x\|^{4q}_s) $ and $\EE^{\pi^N}( \|x - y(1- \ell \Delta t)\|^{2q}_s) \lesssim (\Delta t)^{2q} \cdot (1 + \|x\|^{4q}_s). $ 
Similarly, from Lemma \ref{lem:Remest} we gather that $\EE^{\pi^N}\|\rr^N(x,\delta)\|^{4q}_{\C^N} \lesssim \delta^{8q}  (1 + \|x\|^{4q}_s)$. Putting these two together and using the Cauchy-Schwartz inequality gives,
\be
 \EE^{\pi^N}\big[ I_4^{2q} \big] &\lesssim \Delta t^{3q} \, \EE^{\pi^N}\Big[ 1+\|x\|^{2q}_s + \|y\|^{2q}_s \Big] \lesssim  \Big(N^{-\frac13}\Big)^{2q}. \label{eqn:i4t}
\ee
\end{itemize}
Equations \eqref{eqn:i2t}, \eqref{eqn:i3t} and \eqref{eqn:i4t} imply the requisite estimate in \eqref{e.I2.3.small}. 

\noindent Next, we tackle the term $I_1$. Recall from from \eqref{eqn:I1main} that
\be
I_1
&= -\frac{\ell \Delta t}{4} \Big( \|y\|_{\C^N}^2 - \|x\|_{\C^N}^2\Big).
\ee
From Lemma \ref{lem:appryx} we obtain that
\be \label{eqn:gapp1}
\|y\|_{\C^N}^2 = \|\xmala\|^2_{\C^N} + a^N(x,\ell \Delta t), \quad \quad \EE^{\pi^N} a^N(x,\ell \Delta t)  \lesssim (\Delta t)^2.
\ee
Consequently, 
\be
I_1
&= -\frac{\ell \Delta t}{4} \Big( \|\xmala\|_{\C^N}^2 - \|x\|_{\C^N}^2\Big)  -\frac{\ell \Delta t}{4} a^N(x,\ell \Delta t).
\ee
From Lemma 4.4 of \cite{pill:stu:12}, we deduce that
\be \label{eqn:gapp2}
 -\frac{\ell \Delta t}{4} \Big( \|\xmala\|_{\C^N}^2 - \|x\|_{\C^N}^2\Big) 
&= Z^N(x,\xi^N) \;+\; i^N(x,\xi^N) \;+ \; b^N(x,\xi^N)
\end{equs}
with $Z^N(x,\xi^N)$ and $i^N(x,\xi^N)$ given by Equation \eqref{e.Zn} and \eqref{e.iN} and
\begin{equs}
\Big( \EE^{\pi^N} \big[ b^N(x,\xi^N)^{2q} \big] \Big)^{\frac{1}{2q}} \;=\; \OO(N^{-\frac13}).
\end{equs}
Lemma 4.4 of \cite{pill:stu:12} also shows that
\begin{equs} \label{e.in.bound}
\Big( \EE^{\pi^N} \big[ i^N(x,\xi^N)^{2q} \big] \Big)^{\frac{1}{2q}} = \OO(N^{-\frac16}).
\end{equs}
The proof of the lemma now follows from \eqref{e.I2.3.small}, \eqref{eqn:gapp1} and \eqref{eqn:gapp2}.
\end{proof}

\noindent

We recall Lemma 4.5 of \cite{pill:stu:12}:
\begin{lemma} \cite{pill:stu:12}(Lemma 4.5) \label{lem:concentration} {\bf (Asymptotic independence)}
Let $p \geq 1$ be a positive integer and $f:\RR \to \RR$ be a $1$-Lipschitz function. 
Consider error terms $\err^N_{\star}(x,\xi)$ satisfying 
\begin{equs}
\lim_{N \to \infty} \; \EE^{\pi^N}[\err^N_{\star}(x,\xi^N)^p] = 0. 
\end{equs}
Define the functions $\bar{f}^N: \RR \to \RR$ and the constant $\bar{f} \in \RR$ by
\begin{equs}
\bar{f}^N(x) = \EE_x\Big[ f\big(Z^N(x,\xi^N) + \err^N_{\star}(x,\xi^N) \big) \Big]
\qquad \qquad \text{and} \qquad \qquad
\bar{f} = \EE[f(Z_{\ell})].
\end{equs}
Then the function $f^N$ is highly concentrated around its mean in the sense that
\begin{equs}
\lim_{N \to \infty} \; \EE^{\pi^N}\Big[ |\bar{f}^N(x) - \bar{f}|^p \Big] = 0.
\end{equs}
\end{lemma}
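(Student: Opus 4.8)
The plan is to peel off the error term using the Lipschitz property of $f$ and then reduce everything to a law of large numbers for the conditional variance of the Gaussian $Z^N$. First I would introduce $g^N(x) \eqdef \EE_x[f(Z^N(x,\xi^N))]$, the error-free version of $\bar f^N$. Since $f$ is $1$-Lipschitz, $|f(Z^N + \err^N_\star) - f(Z^N)| \le |\err^N_\star|$ pointwise in $\xi^N$, so by conditional Jensen
\begin{equs}
\EE^{\pi^N}\big[\,|\bar f^N(x) - g^N(x)|^p\,\big] \;\le\; \EE^{\pi^N}\EE_x\big[\,|\err^N_\star(x,\xi^N)|^p\,\big] \;=\; \EE^{\pi^N}\big[\,|\err^N_\star(x,\xi^N)|^p\,\big],
\end{equs}
which vanishes as $N \to \infty$ by hypothesis (taking $p$ even, as elsewhere in the paper). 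By the triangle inequality in $L^p(\pi^N)$ it then suffices to show $\EE^{\pi^N}[\,|g^N(x) - \bar f|^p\,] \to 0$.

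For fixed $x$, the random variable $Z^N(x,\xi^N)$ of \eqref{e.Zn} is an affine function of the i.i.d.\ standard Gaussians $\xi_j$, hence Gaussian with mean $-\ell^3/4$ and, reading off the coefficients, variance $v_N(x) \eqdef \frac{\ell^3}{2N}\sum_{j=1}^N \lambda_j^{-2} x_j^2$. Consequently $g^N(x) = G(v_N(x))$, where $G(v) \eqdef \EE[f(-\tfrac{\ell^3}{4} + \sqrt{v}\,W)]$ with $W \sim \Normal(0,1)$, and $\bar f = G(\ell^3/2)$ since $Z_{\ell} \dist \Normal(-\ell^3/4, \ell^3/2)$. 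Because $f$ is $1$-Lipschitz, $|G(v_1) - G(v_2)| \le \EE|W|\,|\sqrt{v_1} - \sqrt{v_2}| \lesssim |\sqrt{v_1} - \sqrt{v_2}|$, and the elementary inequality $|\sqrt a - \sqrt b| \le \sqrt{|a-b|}$ (valid for all $a,b \ge 0$) yields the key bound $|g^N(x) - \bar f| \lesssim \sqrt{|v_N(x) - \ell^3/2|}$. Hence $\EE^{\pi^N}[\,|g^N(x)-\bar f|^p\,] \lesssim \EE^{\pi^N}[\,|v_N(x) - \ell^3/2|^{p/2}\,]$, so it remains to control this last quantity.

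This last step is the crux, and the hard part will be establishing the $L^{p/2}$ convergence of $v_N$ uniformly in $N$; I would handle it by transferring to the reference measure. Using the change-of-measure bound $\EE^{\pi^N}[|F|] \lesssim \EE^{\pi_0}[|F|]$ with constant uniform in $N$, applied to $F = |v_N(\cdot) - \ell^3/2|^{p/2}$, it is enough to bound the expectation under $\pi_0$. Under $\pi_0$ the coordinates are independent with $x_j \sim \Normal(0,\lambda_j^2)$, so the $\lambda_j^{-2}x_j^2$ are i.i.d.\ $\chi^2_1$ variables with unit mean and all moments finite, and $v_N(x) - \ell^3/2 = \frac{\ell^3}{2N}\sum_{j=1}^N(\lambda_j^{-2}x_j^2 - 1)$ is a normalized sum of centered i.i.d.\ terms. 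A standard Marcinkiewicz--Zygmund (or direct moment) estimate then gives $\EE^{\pi_0}[\,|v_N - \ell^3/2|^{p/2}\,] = \OO(N^{-p/4}) \to 0$, finishing the proof. The one delicate point is that $\sqrt{\cdot}$ is not Lipschitz at the origin; this is exactly what the bound $|\sqrt a - \sqrt b| \le \sqrt{|a-b|}$ sidesteps, so that no lower bound on $v_N(x)$ is needed and the argument is valid for $\pi_0$-almost every $x$, including those where the conditional variance is small.
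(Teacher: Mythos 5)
Your proposal is correct, but note a peculiarity of the comparison: this paper never proves Lemma \ref{lem:concentration} at all --- it is imported verbatim, with citation, from Lemma 4.5 of \cite{pill:stu:12}, so the only proof to compare against is the one in that earlier paper. Your argument is essentially that proof reconstructed: strip the error term via the Lipschitz property and Jensen's inequality; observe that conditionally on $x$ the variable $Z^N(x,\xi^N)$ is Gaussian with mean $-\ell^3/4$ and variance $\frac{\ell^3}{2N}\|x\|^2_{\C^N} = \frac{\ell^3}{2N}\sum_{j=1}^N \lambda_j^{-2}x_j^2$, so that $\EE_x[f(Z^N)]$ depends on $x$ only through this variance; bound the discrepancy of the two Gaussian expectations by the discrepancy of the standard deviations via the obvious coupling; and close with a law-of-large-numbers moment bound for $\frac1N\sum_{j=1}^N \lambda_j^{-2}x_j^2$ under $\pi_0$ (i.i.d.\ $\chi^2_1$ coordinates), transferred to $\pi^N$ by the uniform change-of-measure estimate $\EE^{\pi^N}[|F|] \lesssim \EE^{\pi_0}[|F|]$ recalled in Section \ref{sec:approx}. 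Two small points in your write-up deserve flagging, and both are handled correctly: the inequality $|\sqrt a - \sqrt b| \le \sqrt{|a-b|}$ neatly avoids any lower bound on the conditional variance (which can indeed be arbitrarily small on a $\pi^N$-null-approaching set), and the restriction to even $p$ (equivalently, reading the hypothesis as $\EE^{\pi^N}[|\err^N_\star|^p] \to 0$) is harmless, since the lemma is only invoked (Corollary \ref{rem:concentration}) with error terms whose bounds from Lemma \ref{lem:Gaussian_approx} hold in absolute value for every $p$.
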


\begin{corollary} \label{rem:concentration}
Let $p \geq 1$ be a positive.
The local mean acceptance probability $\alpha^N(x)$ defined in Equation \eqref{e.loc.accept} satisfies
\begin{equs}
\lim_{N \to \infty} \; \EE^{\pi^N}\big[ |\alpha^N(x) - \alpha(\ell)|^p \big] = 0.
\end{equs}
\end{corollary}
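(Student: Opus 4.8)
The plan is to recognize Corollary \ref{rem:concentration} as an immediate consequence of the Gaussian approximation in Lemma \ref{lem:Gaussian_approx} combined with the asymptotic independence result of Lemma \ref{lem:concentration}. The key is to apply Lemma \ref{lem:concentration} with the specific test function $f(z) = 1 \wedge e^z$ and the combined error term $\err^N_\star = i^N + \err^N$, after which the corollary drops out in one line.

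First I would verify that $f(z) = 1 \wedge e^z$ is $1$-Lipschitz, as required by Lemma \ref{lem:concentration}: for $z \leq 0$ one has $f(z) = e^z$ with $0 < f'(z) = e^z \leq 1$, while for $z > 0$ one has $f(z) \equiv 1$ with $f'(z) = 0$; hence $|f'| \leq 1$ everywhere. Next I would set $\err^N_\star(x,\xi^N) \equiv i^N(x,\xi^N) + \err^N(x,\xi^N)$ and check that it satisfies the smallness hypothesis of Lemma \ref{lem:concentration}. By Lemma \ref{lem:Gaussian_approx} and Minkowski's inequality,
$$\Big(\EE^{\pi^N}\big[|\err^N_\star|^p\big]\Big)^{1/p} \leq \Big(\EE^{\pi^N}\big[|i^N|^p\big]\Big)^{1/p} + \Big(\EE^{\pi^N}\big[|\err^N|^p\big]\Big)^{1/p} = \OO(N^{-1/6}),$$
so that $\EE^{\pi^N}[|\err^N_\star(x,\xi^N)|^p] \to 0$ as $N \to \infty$. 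For non-even $p$ I would simply bound the $L^p$ norm by the $L^{p+1}$ norm on the probability space, reducing to the moment estimates already supplied.

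Finally I would observe that with these choices the two objects appearing in Lemma \ref{lem:concentration} are exactly the quantities of interest. Using the decomposition $Q^N = Z^N + i^N + \err^N = Z^N + \err^N_\star$ from \eqref{e.gauss.app.error} together with the definition \eqref{e.loc.accept},
$$\bar{f}^N(x) = \EE_x\big[f(Z^N(x,\xi^N) + \err^N_\star(x,\xi^N))\big] = \EE_x\big[1 \wedge e^{Q^N(x,\xi^N)}\big] = \alpha^N(x),$$
while $\bar{f} = \EE[1 \wedge e^{Z_{\ell}}] = \alpha(\ell)$ by \eqref{e.limit.acc.prob}. Lemma \ref{lem:concentration} then yields $\lim_{N\to\infty}\EE^{\pi^N}[|\alpha^N(x) - \alpha(\ell)|^p] = 0$, which is the claim.

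The main obstacle is not in the corollary itself, which is a clean application once the correct $f$ and $\err^N_\star$ are identified, but in the two lemmas it rests on; all the genuine work has already been done. Lemma \ref{lem:Gaussian_approx} establishes the $L^p$-smallness of $i^N$ and $\err^N$ (relying in turn on the remainder estimate of Lemma \ref{lem:Remest} and the jump-size bound of Lemma \ref{lem:size:prop}), and Lemma \ref{lem:concentration} supplies the concentration of the conditional expectation $\bar{f}^N(x)$ around the universal constant $\bar{f}$. The only mild subtleties to check are the $L^p$ combination of the two error terms and, for non-even $p$, the reduction to even moments.
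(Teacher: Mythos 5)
Your proof is correct and follows essentially the same route as the paper: apply Lemma \ref{lem:concentration} with $f(z) = 1 \wedge e^z$ and $\err^N_{\star} = i^N + \err^N$, using Lemma \ref{lem:Gaussian_approx} (via Minkowski) to verify the smallness hypothesis, so that $\bar{f}^N = \alpha^N$ and $\bar{f} = \alpha(\ell)$. Your additional checks (the explicit Lipschitz verification and the reduction for non-even $p$, which the paper handles inside Lemma \ref{lem:Gaussian_approx} by the ``without loss of generality $p = 2q$'' step) are sound but not a departure from the paper's argument.
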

\begin{proof}
The function $f(z)=1 \wedge e^z$ is $1$-Lipschitz and $\alpha(\ell) = \EE[f(Z_{\ell})]$. Also, 
\begin{equs}
\alpha^N(x) = \EE_x\Big[ f(Q^N(x,\xi^N)) \Big] = \EE_x \big[f(Z^N(x,\xi^N) + \err^N_{\star}(x,\xi^N) \Big]
\end{equs}
with $\err^N_{\star}(x,\xi^N) = i^N(x,\xi^N) +\err^N(x,\xi^N)$. Lemma \ref{lem:Gaussian_approx} shows that 
$$\lim_{N \to \infty} \; \EE^{\pi^N}[\err^N_{\star}(x,\xi)^p] = 0$$ and therefore
Lemma \ref{lem:concentration} gives the conclusion.
\end{proof}

\subsection{Drift approximation}
\label{ssec:da}

This section proves that the approximate drift function $d^N:\h^s \to \h^s$ defined in Equation \eqref{eq:drift}
converges to the drift function $\mu:\h^s \to \h^s$ of the limiting diffusion \eqref{eqn:spdelim}.

\begin{lemma} {\bf (Drift Approximation)} \label{lem.drift.approx}
Let Assumptions \ref{ass:1} hold. 
The drift function $d^N:\h^s \to \h^s$ converges to $\mu$ in the sense that 
\begin{equs}
\lim_{N \to \infty} \, \EE^{\pi^N}\Big[ \|d^N(x)-\mu(x)\|^2_s \Big] = 0. 
\end{equs}
\end{lemma}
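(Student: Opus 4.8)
The plan is to establish the drift approximation by reducing it, via the drift-martingale structure, to two ingredients already in hand: the closeness of the proximal proposal to the MALA proposal (Equation \eqref{eqn:pmala-mala} together with Lemma \ref{lem:Remest}), and the Gaussian/acceptance-probability estimates of Lemma \ref{lem:Gaussian_approx} and Corollary \ref{rem:concentration}. First I would unpack the definition \eqref{eq:drift} of $d^N$ using the update \eqref{eqn:MALA_algorithm}. Since $x^{1,N}-x^{0,N} = \gamma^{0,N}(y-x)$ and $\gamma^{0,N}$ is the accept-reject indicator with conditional success probability $\alpha^N(x,\xi^N)$, conditioning first on $\xi^N$ gives
\begin{equs}
\big(h(\ell)\Delta t\big)\, d^N(x) = \EE_x\big[ \alpha^N(x,\xi^N)\,(y-x) \big],
\end{equs}
where $y-x = \delta \mu^N(x) + \rr^N(x,\delta) + \sqrtd{2\delta}\,(\C^N)^{\frac12}\xi^N$ from \eqref{eqn:proposal}. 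Recalling $\delta = \ell\Delta t$, the deterministic part contributes $\delta\,\alpha^N(x)\,\mu^N(x)$ after integrating out $\xi^N$, while the noise part $\sqrtd{2\delta}\,(\C^N)^{\frac12}\xi^N$ must be handled carefully because $\alpha^N(x,\xi^N)$ itself depends on $\xi^N$.

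The main idea is then to show that, after dividing by $h(\ell)\Delta t = \ell\,\alpha(\ell)\,\Delta t = \delta\,\alpha(\ell)$, the leading term becomes $\tfrac{\alpha^N(x)}{\alpha(\ell)}\mu^N(x)$, which converges to $\mu(x)$ in $L^2(\pi^N)$ because $\alpha^N(x)\to\alpha(\ell)$ by Corollary \ref{rem:concentration}, $\mu^N(x)\to\mu(x)$ by \eqref{eqn:muntomu}, and both are controlled in $L^p$ by the uniform moment bounds of Section \ref{sec:approx} (so one can pass to the limit by a Cauchy-Schwarz plus uniform-integrability argument). Three error contributions must be shown to vanish in $L^2(\pi^N)$: the proximal remainder $\EE_x[\alpha^N(x,\xi^N)\rr^N(x,\delta)]/(h(\ell)\Delta t)$, which is $\OO(\delta^2)/\delta = \OO(\delta)$ by Lemma \ref{lem:Remest} since $\alpha^N \le 1$; the covariance between the acceptance indicator and the noise, $\EE_x[\alpha^N(x,\xi^N)\,(\C^N)^{\frac12}\xi^N]$; and the replacement of $\alpha^N(x,\xi^N)$ by its mean $\alpha^N(x)$ in the deterministic term. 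The first is routine given Lemma \ref{lem:Remest}.

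The hard part will be the noise-acceptance correlation term $\EE_x\big[\alpha^N(x,\xi^N)\,(\C^N)^{\frac12}\xi^N\big]$, since the expectation of $(\C^N)^{\frac12}\xi^N$ alone is zero but the acceptance probability's dependence on $\xi^N$ induces a nonzero correlation. The strategy here is to write $\alpha^N(x,\xi^N) = 1\wedge e^{Q^N(x,\xi^N)}$ and use the Gaussian decomposition \eqref{e.gauss.app.error}: since $Z^N(x,\xi^N)$ is, for fixed $x$, a centered-plus-drift Gaussian linear in $\xi$, one expects the correlation of $\alpha^N$ with $\xi^N$ to be of order $\sqrt{\delta}$ (because $Z^N$ has variance $\OO(\ell^3)=\OO(1)$ but its fluctuation from a single coordinate is $\OO(N^{-1/2})$), so that $\sqrtd{2\delta}$ times this correlation is $\OO(\delta)$ after normalization and thus vanishes. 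Making this rigorous requires a Stein-type or integration-by-parts argument on the Gaussian $\xi^N$, controlling the error terms $i^N$ and $\err^N$ via Lemma \ref{lem:Gaussian_approx}; this is precisely the step where one leans on the fact that individual coordinates contribute negligibly in the $N\to\infty$ limit. This parallels the corresponding drift-approximation lemma in \cite{pill:stu:12}, and the only genuinely new input is bounding the extra $\rr^N$ contributions, which Lemma \ref{lem:Remest} supplies.
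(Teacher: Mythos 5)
Your decomposition is exactly the one the paper uses: conditioning on $\xi^N$ gives
$d^N(x) = \big(\alpha^N(x)\alpha(\ell)^{-1}\big)\mu^N(x) + \mathcal{R}^N_{\mathrm{Prox}}(x,\Delta t) + \sqrt{2\ell}\,h(\ell)^{-1}(\Delta t)^{-\frac12}\eps^N(x)$
with $\eps^N(x)=\EE_x\big[\big(1\wedge e^{Q^N(x,\xi^N)}\big)\,\C^{\frac12}\xi^N\big]$, and your treatment of the first two terms (Corollary \ref{rem:concentration} together with \eqref{eqn:muntomu} and the uniform moment bounds for the leading term; $|\alpha^N|\le 1$ together with Lemma \ref{lem:Remest} for the proximal remainder) matches the paper's. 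Your ``third error'' --- replacing $\alpha^N(x,\xi^N)$ by $\alpha^N(x)$ against the deterministic part --- is identically zero, since $\mu^N(x)$ and $\rr^N(x,\delta)$ do not depend on $\xi^N$.

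The gap is in the noise--acceptance correlation term, which is where the real content of the lemma lies. First, your claimed rate would not close the argument: after dividing by $h(\ell)\Delta t \propto \delta$, the noise contribution to $d^N$ is $\sqrt{2/\delta}\,\eps^N(x)$ up to constants, so you need $\EE^{\pi^N}\|\eps^N(x)\|_s^2 = o(\delta)$; a correlation of order $\sqrt{\delta}$, as you posit, yields $\sqrt{2\delta}\cdot\OO(\sqrt{\delta})/\delta = \OO(1)$, which does not vanish. The paper proves the stronger bound $\EE^{\pi^N}\|\eps^N(x)\|_s^2 \lesssim N^{-\frac23}=(\Delta t)^2$, so that the normalized term is $\OO\big((\Delta t)^{\frac12}\big)$. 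Second, the mechanism: a Stein/integration-by-parts argument on $\xi^N$ runs into trouble because the Gaussian decomposition \eqref{e.gauss.app.error} controls the remainder $\err^N(x,\xi^N)$ only in $L^p$ --- nothing is known about its derivative with respect to $\xi_j$, which integration by parts would require. The paper instead decouples one coordinate at a time: write $Q^N = Q^N_j + Q^N_{j,\perp}$ as in \eqref{e.QQi}, where $Q^N_{j,\perp}$ is independent of $\xi_j$ and $Q^N_j$ absorbs both the explicit $\xi_j$-terms and the \emph{whole} of $\err^N$. Since $\EE_x\big[\big(1\wedge e^{Q^N_{j,\perp}}\big)\xi_j\big]=0$ and $z\mapsto 1\wedge e^z$ is $1$-Lipschitz, one gets $\bra{\eps^N(x),\pphi_j}_s^2 \lesssim (j^s\lambda_j)^2\,\EE_x\big[Q^N_j(x,\xi^N)^2\big]$, and summing over $j$ (the $\err^N$ contribution, paid in every coordinate, dominates and gives $N^{-\frac23}\tr_{\h^s}(\C_s)$) yields the $(\Delta t)^2$ bound. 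This coordinate-decoupling-plus-Lipschitz step is what you must substitute for the Stein argument to make your proposal work.
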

\begin{proof}
Now that we have established the relevant estimates, the proof of this lemma is nearly identical to that of Lemma 4.7 of \cite{pill:stu:12}, but also needs to account for the extra error term induced by the proximal operator. The approximate drift $d^N$ is given by Equation \eqref{eq:drift}.
The definition of the local mean acceptance probability $\alpha^N(x)$
given by Equation \eqref{e.loc.accept} shows that $d^N$
can also be expressed as
\begin{equs} \label{eqn:driftrem}
d^N(x) = \Big( \alpha^N(x) \alpha(\ell)^{-1} \Big) \mu^N(x) + \mathcal{R}_{\mathrm{Prox}}^N (x, \Delta t) 
+ \sqrt{2 \ell} h(\ell)^{-1} (\Delta t)^{-\frac12} \eps^N(x)
\end{equs}
where $\mu^N(x) = -\Big( P^N x + \C^N \nabla \Psi^N(x)\Big)$; the term $\eps^N(x)$ is defined by
\begin{equs}
\eps^N(x)
\;=\; \EE_x\big[ \gamma^N(x,\xi^N) \, \C^{\frac{1}{2}} \xi^N \big] 
\;=\; \EE_x \big[ \big( 1 \wedge e^{Q^N(x,\xi^N)} \big) \, \C^{\frac{1}{2}} \xi^N \big]
\end{equs}
and the term $\mathcal{R}_{\mathrm{Prox}}^N (x, \Delta t)$ is the error term induced by the proximal approximation:
\be
\mathcal{R}^N_{\mathrm{Prox}}(x,\Delta t) = {\alpha^N(x) \over h(\ell)} {\rr^n(x,\ell \Delta) \over \Delta t}.
\ee
To prove Lemma \ref{lem.drift.approx} it suffices to verify that
\begin{equs}
&\lim_{N \to \infty} 
\EE^{\pi^N} \Big[ \big\| \big( \alpha^N(x) \alpha(\ell)^{-1} \big) 
\mu^N(x) - \mu(x) \big\|_s^2 \Big]= 0 \label{e.drift.cond1}\\
&\lim_{N \to \infty}  \EE^{\pi^N}\|\mathcal{R}^N_{\mathrm{Prox}}(x,\Delta t)\|^2_s = 0 \label{e.drift.cond2}\\
&\lim_{N \to \infty} (\Delta t)^{-1} \; 
\EE^{\pi^N} \Big[ \|\eps^N(x) \|_s^2 \Big] = 0. \label{e.drift.cond3}
\end{equs}
\begin{itemize}
\item
Equation \eqref{e.drift.cond1}  follows directly from Lemma 4.7 of \cite{pill:stu:12}.
\item Next, using the fact that $|\alpha^N(x)| \leq 1$ and Lemma \ref{lem:Remest},
\be
\|\mathcal{R}^N_{\mathrm{Prox}}(x,\Delta t)\|^2_s &\lesssim \Big( {\alpha^N(x)} \Big)^2   \Big \|{\rr^n(x,\ell \Delta t) \over \Delta t}\Big\|^2_s \\
&\lesssim (\Delta t)^2 (1 + \|x\|^2_s)
\ee
and thus we have
\be
 \lim_{N \to \infty} \EE^{\pi^N}\|\mathcal{R}^N_{\mathrm{Prox}}(x,\Delta t)\|^2_s = \lim_{N \to \infty} N^{-2/3} \EE^{\pi^N}(1 + \|x\|^2_s) = 0
\ee
establishing \eqref{e.drift.cond2}.
\item
Let us prove Equation \eqref{e.drift.cond3}. 
If the Bernoulli random variable $\gamma^N(x,\xi^N)$ were independent from the noise term $(\C^N)^{\frac{1}{2}} \xi^N$, 
it would follow that $\eps^N(x)=0$. In general $\gamma^N(x,\xi^N)$ is not independent from $(\C^N)^{\frac12} \xi^N$
so that $\eps^N(x)$ is not equal to zero. Nevertheless, as 
quantified by Lemma \ref{lem:concentration}, 
the Bernoulli random variable $\gamma^N(x,\xi^N)$ is asymptotically 
independent from the current position $x$ and from the noise term $(\C^N)^{\frac12} \xi^N$. 
Consequently, we can prove in Equation \eqref{e.eps.small} that the quantity $\eps^N(x)$ is small. 
To this end, we establish that each component $\bra{\eps(x), \pphi_j}^2_s$ satisfies
\begin{equs} \label{e.component.eps}
\EE^{\pi^N} \big[ \bra{\eps^N(x), \pphi_j}^2_s \big] \quad \lesssim \quad N^{-1} \EE^{\pi^N}[\bra{x,\pphi_j}^2_s] + N^{-\frac{2}{3}} (j^s \lambda_j)^2.
\end{equs}
Summation of Equation \eqref{e.component.eps}
over $j=1, \ldots,N$ leads to
\begin{equs}
\EE^{\pi^N} \; \Big[ \|\eps^N(x)\|_s^2 \Big] 
\quad \lesssim \quad N^{-1} \EE^{\pi^N}\big[\|x\|_s^2 \big] + N^{-\frac{2}{3}} \, \tr_{\h^s}(\C_s) 
\quad \lesssim \quad N^{-\frac{2}{3}},\\  \label{e.eps.small}
\end{equs}
which gives the proof of Equation \eqref{e.drift.cond3}. 
To prove Equation \eqref{e.component.eps} for a fixed index $j \in \mathbb{N}$, the quantity 
$Q^N(x,\xi)$ is decomposed as a sum of a term independent from $\xi_j$ and another remaining term of small magnitude.
To this end we introduce
\begin{equs} \label{e.QQi}
\left\{
\begin{array}{ll}
Q^N(x,\xi^N) &= Q^N_j(x,\xi^N) + Q^N_{j,\perp}(x,\xi^N) \\
Q^N_{j}(x,\xi^N) &= -\frac{1}{\sqrt{2}} \ell^{\frac32} N^{-\frac12} \lambda_j^{-1} x_j \xi_j 
- \frac{1}{2} \ell^2 N^{-\frac23} \lambda_j^2 \xi_j^2
+\err^N(x,\xi^N).
\end{array}
\right.
\end{equs}
The definitions of $Z^N(x,\xi^N)$ and $i^N(x,\xi^N)$ in Equation \eqref{e.Zn} and \eqref{e.iN} 
readily show that $Q^N_{j,\perp}(x,\xi^N)$ is independent from $\xi_j$.
The noise term satisfies $\C^{\frac12} \xi^N = \sum_{j=1}^N (j^s \lambda_j) \xi_j \pphi_j$.
Since $Q^N_{j,\perp}(x,\xi^N)$ and $\xi_j$ are independent and $z \mapsto 1 \wedge e^z$ is $1$-Lipschitz, it follows that
\begin{equs}
\bra{\eps^N(x), \pphi_j}^2_s
&= (j^s \lambda_j)^2 \; \Big( \EE_x\big[ \big(1 \wedge e^{Q^N(x,\xi^N)} \big) \; \xi_j \big] \Big)^2 \\
&= (j^s \lambda_j)^2 \; \Big( \EE_x\big[ [\big(1 \wedge e^{Q^N(x,\xi^N)} \big)-\big(1 \wedge e^{Q^N_{j,\perp}(x,\xi^N)} \big)] \; \xi_j \big] \Big)^2 \\
&\lesssim (j^s \lambda_j)^2 \EE_x \big[ |Q^N(x,\xi^N))-Q^N_{j,\perp}(x,\xi^N)|^2\big]\\
&= (j^s \lambda_j)^2 \EE_x \big[ Q^N_{j}(x,\xi^N)^2 \big].
\end{equs}
By Lemma \ref{lem:Gaussian_approx} $\EE^{\pi^N}\big[ \err^N(x,\xi^N)^2 \big] \lesssim N^{-\frac{2}{3}}$. Therefore, 
\begin{equs}
 (j^s \lambda_j)^2 \EE^{\pi^N} \big[ Q^N_{j}(x,\xi^N)^2\big]
&\lesssim  (j^s \lambda_j)^2 \Big\{ N^{-1} \lambda_j^{-2} \EE^{\pi^N}\big[ x^2_j \xi^2_j \big]
+ N^{-\frac43} \EE^{\pi^N}\big[ \lambda_j^4 \xi_j^4 \big]
+\EE^{\pi^N}\big[\err^N(x,\xi)^2 \big] \Big\} \\
&\lesssim N^{-1} \; \EE^{\pi^N}\big[ (j^{s}x_j)^2 \xi^2_j \big] + (j^s \lambda_j)^2 (N^{-\frac{4}{3}} + N^{-\frac{2}{3}}) \\
&\lesssim N^{-1} \; \EE^{\pi^N}\big[ \bra{x,\pphi_j}_s^2 \big] + (j^s \lambda_j)^2 N^{-\frac{2}{3}} \\
&\lesssim N^{-1} \; \EE^{\pi^N}\big[ \bra{x,\pphi_j}_s^2 \big] + (j^s \lambda_j)^2 N^{-\frac{2}{3}},
\end{equs}
which finishes the proof of Equation \eqref{e.component.eps}.
\end{itemize}
Thus we have established \eqref{e.drift.cond1}, \eqref{e.drift.cond2} and \eqref{e.drift.cond3} and the proof is finished.
\end{proof}

\subsection{Noise approximation}
\label{ssec:na}

Recall the definition \eqref{eq:mart} of the martingale
difference $\Gamma^{k,N}.$ In this section we estimate the 
error in the approximation $\Gamma^{k,N} \approx \Normal(0,\C_s)$.
To this end we introduce the covariance operator 
\begin{equs}
D^N(x) = \EE_x \Big[ \Gamma^{k,N} \otimes_{\h^s} \Gamma^{k,N} \;| x^{k,N}=x \Big].
\end{equs}
For any $x, u,v \in \h^s$ the operator $D^N(x)$ satisfies
\begin{equs}
\EE\Big[\bra{\Gamma^{k,N},u}_s \bra{\Gamma^{k,N},v}_s \;|x^{k,N}=x \Big]  \;=\; \bra{u, D^N(x) v}_s.
\end{equs}
The next lemma gives a quantitative version of the approximation $D^N(x) \approx \C_s$.

\begin{lemma} \label{lem:diffus} Let Assumptions \ref{ass:1} hold. 
For any pair of indices $i,j \geq 0$ the operator $D^N(x):\h^s \to \h^s$ satisfies
\begin{equation}
\lim_{N \to \infty} \quad \EE^{\pi^N} \big| \bra{\pphi_i, D^N(x) \pphi_j}_s - \bra{\pphi_i, \C_s \pphi_j}_s \big|   
\;=\; 0 \label{e.noise.approx.1}
\end{equation}
and, furthermore,
\begin{equation}
\lim_{N \to \infty} \quad \EE^{\pi^N} \big| \tr_{\h^s}(D^N(x)) - \tr_{\h^s}(\C_s) \big| 
\;=\; 0. \label{e.noise.approx.2}
\end{equation}
\end{lemma}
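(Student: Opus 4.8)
The plan is to follow the noise-approximation argument of \cite{pill:stu:12} (their analogue of this lemma), adapting it to absorb the extra remainder term $\rr^N$. I would first substitute the increment $x^{k+1,N}-x^{k,N} = \gamma^N(x,\xi^N)(y-x)$, with $y-x = \delta\mu^N(x)+\rr^N(x,\delta)+\sqrtd{2\delta}\,(\C^N)^{\frac12}\xi^N$, into the definition \eqref{eq:mart} of $\Gamma^{k,N}$. Because $\delta=\ell\Delta t$ and $h(\ell)=\ell\alpha(\ell)$, the noise term acquires the prefactor $(2h(\ell)\Delta t)^{-1/2}\sqrtd{2\delta}=\alpha(\ell)^{-1/2}$, so that
\begin{equs}
\Gamma^{k,N} \;=\; \alpha(\ell)^{-\frac12}\,\gamma^N(x,\xi^N)\,(\C^N)^{\frac12}\xi^N \;+\; R^{k,N},
\end{equs}
where $R^{k,N}$ gathers the drift piece $\gamma^N\delta\mu^N$, the proximal remainder $\gamma^N\rr^N$, and the centering $-h(\ell)\Delta t\,d^N$. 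Using $\delta\asymp\Delta t$, Lemma \ref{lem:Remest}, and the bound $\|d^N(x)\|_s\lesssim 1+\|x\|_s$ (from Remark \ref{rem:ts} and the convergence $d^N\to\mu$), every piece of $R^{k,N}$ is $\OO(\sqrtd{\Delta t})$ in $\h^s$ in each $L^p(\pi^N)$, hence enters the second-moment object $D^N(x)=\EE_x[\Gamma^{k,N}\otimes_{\h^s}\Gamma^{k,N}]$ only at lower order.

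For the componentwise statement \eqref{e.noise.approx.1}, I would compute the matrix element using $(\C^N)^{\frac12}\xi^N=\sum_{k\leq N}(k^s\lambda_k)\xi_k\pphi_k$ and $\C_s\pphi_j=(j^s\lambda_j)^2\pphi_j$, whence $\bra{\pphi_i,\C_s\pphi_j}_s=(j^s\lambda_j)^2\mathbf{1}_{\{i=j\}}$, while the leading contribution is
\begin{equs}
\bra{\pphi_i,D^N(x)\pphi_j}_s \;=\; \alpha(\ell)^{-1}(i^s\lambda_i)(j^s\lambda_j)\,\EE_x\big[\gamma^N(x,\xi^N)\,\xi_i\xi_j\big] \;+\; (\text{lower order}).
\end{equs}
Thus it suffices to show $\EE_x[\gamma^N\xi_i\xi_j]\to\alpha(\ell)\mathbf{1}_{\{i=j\}}$ in $L^1(\pi^N)$. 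Here I reuse the asymptotic-independence machinery: decompose $Q^N=Q^N_{ij,\perp}+Q^N_{ij}$ with $Q^N_{ij,\perp}$ independent of $(\xi_i,\xi_j)$ and $\EE^{\pi^N}[(Q^N_{ij})^2]$ small, exactly as in \eqref{e.QQi}. Since $z\mapsto 1\wedge e^z$ is $1$-Lipschitz, the replacement error is controlled by $\EE_x[|Q^N_{ij}|\,|\xi_i\xi_j|]$, and the factorized principal term $\EE_x[1\wedge e^{Q^N_{ij,\perp}}]\,\EE_x[\xi_i\xi_j]$ vanishes for $i\neq j$ and equals $\alpha^N_{ij,\perp}(x)\to\alpha(\ell)$ for $i=j$ by Corollary \ref{rem:concentration}. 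The subtracted outer product $\EE_x[\gamma^N(\C^N)^{\frac12}\xi^N]^{\otimes 2}$ is discarded via the bound $\EE^{\pi^N}\|\eps^N\|_s^2\lesssim N^{-2/3}$ from \eqref{e.eps.small}, and the cross/centering terms via Lemmas \ref{lem:Remest}--\ref{lem:size:prop}.

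For the trace \eqref{e.noise.approx.2} it is cleaner to write $\tr_{\h^s}(D^N(x))=(2h(\ell)\Delta t)^{-1}\big(\EE_x[\gamma^N\|y-x\|_s^2]-(h(\ell)\Delta t)^2\|d^N(x)\|_s^2\big)$, using $\EE_x[\gamma^N(y-x)]=h(\ell)\Delta t\,d^N(x)$. The subtracted norm is $\OO(\Delta t)$ after division; expanding $\|y-x\|_s^2$, the cross and pure drift/remainder terms are again $\OO(\sqrtd{\Delta t})$ by Lemmas \ref{lem:Remest}--\ref{lem:size:prop}, leaving the main term $\alpha(\ell)^{-1}\sum_{j=1}^N(j^s\lambda_j)^2\EE_x[\gamma^N\xi_j^2]$. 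Splitting $\tr_{\h^s}(\C_s)=\sum_{j\leq N}(j^s\lambda_j)^2+\sum_{j>N}(j^s\lambda_j)^2$, the tail vanishes since $\tr_{\h^s}(\C_s)<\infty$, and I would bound
\begin{equs}
\EE^{\pi^N}\Big|\sum_{j=1}^N(j^s\lambda_j)^2\big(\alpha(\ell)^{-1}\EE_x[\gamma^N\xi_j^2]-1\big)\Big| \;\leq\; \sum_{j=1}^N(j^s\lambda_j)^2\,\EE^{\pi^N}\big|\alpha(\ell)^{-1}\EE_x[\gamma^N\xi_j^2]-1\big|.
\end{equs}
Since $\gamma^N\leq 1$ yields the uniform bound $|\alpha(\ell)^{-1}\EE_x[\gamma^N\xi_j^2]-1|\leq\alpha(\ell)^{-1}+1$ and each summand tends to $0$ by the $i=j$ case above, dominated convergence against the summable weights $(j^s\lambda_j)^2$ concludes.

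The hard part will be the final interchange of the $N\to\infty$ limit with the infinite trace sum. Componentwise convergence is fairly direct once the Lipschitz/asymptotic-independence estimate is in hand; but \eqref{e.noise.approx.2} requires per-component error bounds summable against $\tr_{\h^s}(\C_s)$ uniformly in $N$. In particular one must check that the $j$-independent part of the Gaussian-approximation error $\err^N$, when summed over all $j\leq N$, still contributes negligibly — the gain $\EE^{\pi^N}[(\err^N)^2]\lesssim N^{-2/3}$ from Lemma \ref{lem:Gaussian_approx} gives $\sum_{j\leq N}(j^s\lambda_j)^2(\EE^{\pi^N}[(\err^N)^2])^{1/2}\lesssim\tr_{\h^s}(\C_s)\,N^{-1/3}\to 0$, exactly mirroring the summation \eqref{e.component.eps}--\eqref{e.eps.small} used in the drift approximation.
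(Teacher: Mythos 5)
Your proof is correct and takes essentially the same route as the paper: the paper's own proof of this lemma is a two-line citation of Lemma 4.8 of \cite{pill:stu:12}, observing that the only input that argument requires --- the Gaussian approximation of $Q^N$ with the proximal contribution absorbed into the error term $\err^N$ --- is exactly what Lemma \ref{lem:Gaussian_approx} supplies, and what you have written out (the decomposition of $\Gamma^{k,N}$ into the leading noise term $\alpha(\ell)^{-1/2}\gamma^N(\C^N)^{1/2}\xi^N$ plus an $\OO(\sqrtd{\Delta t})$ remainder carrying $\rr^N$, the asymptotic-independence splitting $Q^N=Q^N_{ij,\perp}+Q^N_{ij}$, and the uniform-bound/dominated-convergence step for the trace) is precisely the content of that cited argument, adapted to track the proximal remainder explicitly. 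The only soft spot is your invocation of a pointwise bound $\|d^N(x)\|_s\lesssim 1+\|x\|_s$, which does not follow from $L^2(\pi^N)$-convergence of $d^N$ to $\mu$ (the term $\sqrt{2\ell}\,h(\ell)^{-1}(\Delta t)^{-1/2}\eps^N(x)$ in \eqref{eqn:driftrem} is only $\OO((\Delta t)^{-1/2})$ pointwise); this is harmless, since in your second-moment computation the centering piece only needs $\sqrtd{\Delta t}\,d^N$ to be $\OO(\sqrtd{\Delta t})$ in $L^2(\pi^N)$, which follows from Lemma \ref{lem.drift.approx} together with $\sup_N\EE^{\pi^N}\|\mu(x)\|_s^2<\infty$.
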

\begin{proof}
This lemma follows directly from Lemma 4.8 of \cite{pill:stu:12}, since the only estimate 
needed for the proof of Lemma 4.8 of \cite{pill:stu:12} is the Gaussian approximation and the estimate for $\err^N(x,\xi^N)$ established in Lemma \ref{lem:Gaussian_approx}.
Thus the proof is finished.
\end{proof}

\subsection{Martingale Invariance Principle} \label{sec:bweakconv}
This section proves that the process $W^N$ defined in 
Equation \eqref{e.WN} converges to a Brownian motion.
\begin{proposition} \label{lem:bweakconv}%
Let Assumptions \ref{ass:1} hold.
Let $z^0 \sim \pi$ and $W^N(t)$ the process defined in equation \eqref{e.WN} and
$x^{0,N} \dist \pi^N$ the starting position of the Markov chain $x^N$. Then
\begin{equation}
 (x^{0,N}, W^N) \longweak (z^0,W),
\end{equation}
where $\longweak$ denotes weak convergence in $\h^s \times C([0,T];\h^s)$, and $W$ is
a $\h^s$-valued Brownian motion with covariance operator $\C_s$.
Furthermore the limiting Brownian motion $W$ is independent of the initial condition $z^0$.
\end{proposition}

\begin{proof}

This proof involves verifying three conditions of Proposition $5.1$ of \cite{Berg:86} and is identical to that of Proposition 4.10 of \cite{pill:stu:12}. The only change required is to use our  Lemma \ref{lem:diffus} instead of their Lemma 4.8. Therefore we omit the details of the rest of the proof. 
\end{proof}

\section{Closing Comments} \label{sec:close}
There are a number of related issues that are of great practical interest: 
\begin{itemize}
\item In Theorem 1 of \cite{cruc:23}, the authors extended Theorem \ref{thm:pmalgaus} to a general class of product measures.
\item As mentioned in the introduction, we choose $\lambda  = \delta$. In \cite{cruc:23}, it is shown that for differentiable targets, this choice is optimal. When $\delta \rightarrow 0$ quicker than $\lambda$, proximal MALA is less efficient than MALA.
\item Of course, the most interesting case is when the log-target is not differentiable. In this scenario, \cite{cruc:23} show that the applicability of proximal MALA comes at a cost -- the algorithm scales smaller than $O(N^{-\frac{1}{3}})$ and is less efficient than its smooth counterpart.
\item A similar result should be of interest when proximal functions are used for implementing the Hybrid Monte Carlo algorithm; see \cite{chaa:16}.
\end{itemize}
\section*{Acknowledgements} 
The author thanks Gareth Roberts and Andrew Stuart for introducing him to optimal scaling of Markov Chain Monte Carlo Methods and several inspiring conversations,
Dean Foster and Robert Stine for introducing him to proximal algorithms and related ADMM methods, Alex Belloni, David Dunson, Lucas Janson, Eric Laber, Jonathan Mattingly, Sayan Mukherjee, Debdeep Pati, Christian Robert, Aaron Smith, Pragya Sur and Alex Thiery for very helpful comments and their collaboration over the years. Part of this work was done when the author was an Amazon Scholar. He thanks Robert Stine and Andrew Ruud for their encouragement to work on this problem.

\vskip 0.2in
\bibliography{jmlr_final_ref}

\begin{thebibliography}{26}
\providecommand{\natexlab}[1]{#1}
\providecommand{\url}[1]{\texttt{#1}}
\expandafter\ifx\csname urlstyle\endcsname\relax
  \providecommand{\doi}[1]{doi: #1}\else
  \providecommand{\doi}{doi: \begingroup \urlstyle{rm}\Url}\fi

\bibitem[Bauschke and Combettes(2011)]{baus:comb:11}
H.~H. Bauschke and P.~L. Combettes.
\newblock \emph{Convex {A}nalysis and {M}onotone {O}perator {T}heory in
  {H}ilbert {S}paces}, volume 408.
\newblock Springer, 2011.

\bibitem[Berger(1986)]{Berg:86}
E.~Berger.
\newblock Asymptotic {B}ehaviour of a {C}lass of {S}tochastic {A}pproximation
  {P}rocedures.
\newblock \emph{Probability Theory and Related Fields}, 71\penalty0
  (4):\penalty0 517--552, 1986.
\newblock ISSN 0178-8051.

\bibitem[Beskos and Stuart(2009)]{Besk:Stua:07}
A.~Beskos and A.~Stuart.
\newblock Mcmc {M}ethods for {S}ampling {F}unction {S}pace.
\newblock In \emph{Invited {L}ectures, {S}ixth {I}nternational {C}ongress on
  {I}ndustrial and {A}pplied {M}athematics, {ICIAM07}, {E}ditors {R}olf
  {J}eltsch and {G}erhard {W}anner}, pages 337--364. European Mathematical
  Society, 2009.

\bibitem[Beskos et~al.(2008)Beskos, Roberts, Stuart, and Voss]{Besk:etal:08}
A.~Beskos, G.~Roberts, A.~Stuart, and J.~Voss.
\newblock An {MCMC} {M}ethod for {D}iffusion {B}ridges.
\newblock \emph{Stochastics and Dynamics}, 8\penalty0 (3):\penalty0 319--350,
  2008.

\bibitem[Beskos et~al.(2009)Beskos, Roberts, and Stuart]{BRS09}
A.~Beskos, G.~Roberts, and A.~Stuart.
\newblock Optimal {S}calings of {M}etropolis-{H}astings {A}lgorithms for
  {N}on-{P}roduct {T}argets in {H}igh {D}imensions.
\newblock \emph{Annals of Applied Probability}, 19:\penalty0 863--898, 2009.

\bibitem[Chaari et~al.(2016)Chaari, Tourneret, Chaux, and Batatia]{chaa:16}
L.~Chaari, J.-Y. Tourneret, C.~Chaux, and H.~Batatia.
\newblock A {H}amiltonian {M}onte {C}arlo {M}ethod for {N}on-{S}mooth {E}nergy
  {S}ampling.
\newblock \emph{IEEE Transactions on Signal Processing}, 64\penalty0
  (21):\penalty0 5585--5594, 2016.

\bibitem[Crucinio et~al.(2023)Crucinio, Durmus, Jim{\'e}nez, and
  Roberts]{cruc:23}
F.~R. Crucinio, A.~Durmus, P.~Jim{\'e}nez, and G.~O. Roberts.
\newblock Optimal {S}caling {R}esults for a {W}ide {C}lass of {P}roximal {MALA}
  {A}lgorithms, 2023.
\newblock arXiv preprint arXiv:2301.02446.

\bibitem[Durmus et~al.(2018)Durmus, Moulines, and Pereyra]{Durm:18}
A.~Durmus, E.~Moulines, and M.~Pereyra.
\newblock Efficient {B}ayesian {C}omputation by {P}roximal {M}arkov {C}hain
  {M}onte {C}arlo: {W}hen {L}angevin {M}eets {M}oreau.
\newblock \emph{SIAM Journal on Imaging Sciences}, 11\penalty0 (1):\penalty0
  473--506, 2018.

\bibitem[Elliott and Stuart(1993)]{elli:93}
C.~M. Elliott and A.~Stuart.
\newblock The {G}lobal {D}ynamics of {D}iscrete {S}emilinear {P}arabolic
  {E}quations.
\newblock \emph{SIAM Journal on Numerical Analysis}, 30\penalty0 (6):\penalty0
  1622--1663, 1993.

\bibitem[{G. Da Prato} and {J. Zabczyk}(1992)]{Dapr:Zaby:92}
{G. Da Prato} and {J. Zabczyk}.
\newblock \emph{Stochastic {E}quations in {I}nfinite {D}imensions}, volume~44
  of \emph{Encyclopedia of Mathematics and its Applications}.
\newblock Cambridge University Press, Cambridge, 1992.

\bibitem[Hairer et~al.(2005)Hairer, Stuart, Voss, and Wiberg]{Hair:etal:05}
M.~Hairer, A.~M. Stuart, J.~Voss, and P.~Wiberg.
\newblock Analysis of {SPDEs} {A}rising in {P}ath {S}ampling. {P}art {I}: {T}he
  {G}aussian {C}ase.
\newblock \emph{Communications in Mathematical Sciences}, 3:\penalty0 587--603,
  2005.

\bibitem[Hairer et~al.(2007)Hairer, Stuart, and Voss]{Hair:Stua:Voss:07}
M.~Hairer, A.~M. Stuart, and J.~Voss.
\newblock Analysis of {SPDEs} {A}rising in {P}ath {S}ampling. {P}art {II}:
  {T}he {N}onlinear {C}ase.
\newblock \emph{Annals of Applied Probability}, 17\penalty0 (5-6):\penalty0
  1657--1706, 2007.
\newblock ISSN 1050-5164.

\bibitem[Hairer et~al.(2011)Hairer, Stuart, and Voss]{Hair:Stua:Voss:10}
M.~Hairer, A.~M. Stuart, and J.~Voss.
\newblock Signal {P}rocessing {P}roblems on {F}unction {S}pace: {B}ayesian
  {F}ormulation, {S}tochastic {PDEs} and {E}ffective {MCMC} {M}ethods.
\newblock In D.~Crisan and B.~Rozovsky, editors, \emph{The Oxford Handbook of
  Nonlinear Filtering}. Oxford University Press, 2011.

\bibitem[Jiang et~al.(2023)Jiang, Jiang, Yao, and Yang]{jian:23}
Z.~Jiang, J.~Jiang, Q.~Yao, and G.~Yang.
\newblock A {N}eural {N}etwork-{B}ased {PDE} {S}olving {A}lgorithm with {H}igh
  {P}recision.
\newblock \emph{Scientific Reports}, 13\penalty0 (1):\penalty0 4479, 2023.

\bibitem[Kovachki et~al.(2023)Kovachki, Li, Liu, Azizzadenesheli, Bhattacharya,
  Stuart, and Anandkumar]{kova:23}
N.~Kovachki, Z.~Li, B.~Liu, K.~Azizzadenesheli, K.~Bhattacharya, A.~Stuart, and
  A.~Anandkumar.
\newblock Neural {O}perator: {L}earning {M}aps {B}etween {F}unction {S}paces
  with {A}pplications to {PDEs}.
\newblock \emph{Journal of Machine Learning Research}, 24\penalty0
  (89):\penalty0 1--97, 2023.

\bibitem[Lamperski(2021)]{lam:21}
A.~Lamperski.
\newblock Projected {S}tochastic {G}radient {L}angevin {A}lgorithms for
  {C}onstrained {S}ampling and {N}on-{C}onvex {L}earning.
\newblock In \emph{Conference on Learning Theory}, pages 2891--2937. PMLR,
  2021.

\bibitem[Mattingly et~al.(2012)Mattingly, Pillai, and Stuart]{Matt:Pill:Stu:11}
J.~C. Mattingly, N.~S. Pillai, and A.~M. Stuart.
\newblock Diffusion {L}imits of the {R}andom {W}alk {M}etropolis {A}lgorithm in
  {H}igh {D}imensions.
\newblock \emph{The Annals of Applied Probability}, 22\penalty0 (3):\penalty0
  881--930, 2012.

\bibitem[Pereyra(2016)]{pere:16}
M.~Pereyra.
\newblock Proximal {M}arkov {C}hain {M}onte {C}arlo {A}lgorithms.
\newblock \emph{Statistics and Computing}, 26\penalty0 (4):\penalty0 745--760,
  2016.

\bibitem[Pillai et~al.(2012)Pillai, Stuart, and Thi{\'e}ry]{pill:stu:12}
N.~S. Pillai, A.~M. Stuart, and A.~H. Thi{\'e}ry.
\newblock Optimal {S}caling and {D}iffusion {L}imits for the {L}angevin
  {A}lgorithm in {H}igh {D}imensions.
\newblock \emph{The Annals of Applied Probability}, 22\penalty0 (6):\penalty0
  2320--2356, 2012.

\bibitem[Robert and Casella(2004)]{Case:Robe:04}
C.~P. Robert and G.~Casella.
\newblock \emph{Monte {C}arlo {S}tatistical {M}ethods}.
\newblock Springer Texts in Statistics. Springer-Verlag, New York, second
  edition, 2004.
\newblock ISBN 0-387-21239-6.

\bibitem[Roberts and Rosenthal(1998)]{Robe:Rose:98}
G.~O. Roberts and J.~S. Rosenthal.
\newblock Optimal {S}caling of {D}iscrete {A}pproximations to {L}angevin
  {D}iffusions.
\newblock \emph{Journal of the Royal Statistical Society: Series B (Statistical
  Methodology)}, 60\penalty0 (1):\penalty0 255--268, 1998.
\newblock ISSN 1369-7412.

\bibitem[Roberts and Rosenthal(2001)]{Robe:Rose:01}
G.~O. Roberts and J.~S. Rosenthal.
\newblock Optimal {S}caling for {V}arious {M}etropolis-{H}astings {A}lgorithms.
\newblock \emph{Statistical Science}, 16\penalty0 (4):\penalty0 351--367, 2001.
\newblock ISSN 0883-4237.

\bibitem[Roberts et~al.(1997)Roberts, Gelman, and Gilks]{Robe:etal:97}
G.~O. Roberts, A.~Gelman, and W.~R. Gilks.
\newblock Weak {C}onvergence and {O}ptimal {S}caling of {R}andom {W}alk
  {M}etropolis {A}lgorithms.
\newblock \emph{Annals of Applied Probability}, 7\penalty0 (1):\penalty0
  110--120, 1997.

\bibitem[Roy et~al.(2022)Roy, Shen, Balasubramanian, and Ghadimi]{roy:22}
A.~Roy, L.~Shen, K.~Balasubramanian, and S.~Ghadimi.
\newblock Stochastic {Z}eroth-{O}rder {D}iscretizations of {L}angevin
  {D}iffusions for {B}ayesian {I}nference.
\newblock \emph{Bernoulli}, 28\penalty0 (3):\penalty0 1810--1834, 2022.

\bibitem[Stuart(2010)]{Stua:10}
A.~Stuart.
\newblock Inverse {P}roblems: a {B}ayesian {P}erspective.
\newblock \emph{Acta Numerica}, 19, 2010.

\bibitem[Welling and Teh(2011)]{wel:11}
M.~Welling and Y.~W. Teh.
\newblock Bayesian {L}earning via {S}tochastic {G}radient {L}angevin
  {D}ynamics.
\newblock In \emph{Proceedings of the 28th International Conference on Machine
  Learning (ICML-11)}, pages 681--688. Citeseer, 2011.

\end{thebibliography}

\end{document}